\newcommand{\vph}{\vphantom{A^{A}_{A}}}
\newcommand{\bu}{{\mathbf u}}
 \def\bn{\mathbf n}
  \def\bD{\mathbf D}
\newcommand{\bv}{\boldsymbol{v}}
\newcommand{\bfnu}{\boldsymbol{\nu}}
\newcommand{\partderiv}[2]{\ensuremath{\frac{\partial #1}{\partial #2}}}
\newcommand{\mb}[1]{\mathbf{#1}}
\newcommand{\cW}{\mathcal{W}}
\newcommand{\wof}{{\mathcal W}_{\text{\begin{tiny}{OF}\end{tiny}}}}
\newcommand{\dwof}{{\dot{\mathcal W}}_{\text{\begin{tiny}{OF}\end{tiny}}}}
   \newcommand{\tv}{T^{\mathrm{\begin{tiny}{v}\end{tiny}}}} \newcommand{\te}{T^{\text{\begin{tiny}{e}\end{tiny}}}}
     \newcommand{\gv}{{\mathbf g}^{\mathrm{\begin{tiny}{v}\end{tiny}}}}
      \newcommand{\gel}{{\mathbf g}^{\text{\begin{tiny}{e}\end{tiny}}}}
\renewcommand\div{\operatorname{div}}
\newcommand\curl{\operatorname{curl}}
\newcommand\tr{\operatorname{tr}}
\newcommand{\varperp}{\varepsilon_{\perp}}
\newcommand{\vara}{\varepsilon_a}
\def\tr{\mathop{\rm tr}\nolimits}
\def\curl{\mathop{\rm curl}\nolimits}
\newcommand{\wion}{\cW_\text{\begin{tiny}{ion}\end{tiny}}}
\newcommand{\wc}{\cW_\text{\begin{tiny}{elec}\end{tiny}}}
\newcommand{\bl}{{\mathbf l}}
\newtheorem{remark}{Remark}
\newcommand{\bE}{{\mathbf E}}
\newcommand{\bx}{{\mathbf x}}
\newcommand{\Ccchamber}{C_{\textrm{\begin{tiny}C\end{tiny}}}^{\textrm{\begin{
tiny}II\end{tiny}}}}
\newcommand{\Chcominus}{C_{{\textrm{\begin{tiny}H\end{tiny}}}{\textrm{\begin{
tiny}CO\end{tiny}}}_3^{\textrm{\begin{tiny}-\end{tiny}}}}}
\newcommand{\Jhcominus}{J_{{\textrm{\begin{tiny}H\end{tiny}}}_2{\textrm{\begin{
tiny}CO\end{tiny}}}_3^{-}}^{\textrm{\begin{tiny}II\end{tiny}}}}
\newcommand{\xx}{{\mathbf x}}
\newcommand{\bt}{{\mathbf t}}
\newcommand{\bnu}{{\boldsymbol\nu}}
\newcommand{\ff}{\boldsymbol f}
\newcommand{\calW}{\mathcal W}
\newcommand{\Pe}{\mathrm{Pe}}
\newcommand{\Rey}{\mathrm{Re}}
\newcommand{\FN}{\mathrm{F}}
\newcommand{\BN}{\mathrm{B}}
\title{Modeling of nematic electrolytes and nonlinear electroosmosis}
\author{M.~Carme Calderer\thanks{School of Mathematics, University of Minnesota, 
Minneapolis, MN 55455, USA.} \and Dmitry Golovaty \thanks{Department of Theoretical and Applied Mathematics, 
University of Akron, Akron, OH 44325, USA.} \and Oleg Lavrentovich \thanks{Liquid Crystal Institute, Kent State University, Kent, OH 44242} \and Noel J. Walkington
\thanks{Department of Mathematical Sciences, Carnegie-Mellon University, Pittsburgh, PA 15213}}
\begin{document}

\maketitle

\begin{abstract}
We derive a mathematical model of a nematic electrolyte based on the Leslie-Ericksen theory of liquid crystal flow. Our goal is to investigate the nonlinear electrokinetic effects that occur because the nematic matrix is anisotropic, in particular, transport of ions in a direction perpendicular to the electric field as well as quadratic dependence of the induced flow velocity on the electric field. The latter effect makes it possible to generate sustained flows in the nematic electrolyte that do not reverse their direction when the polarity of the applied electric field is reversed. From a practical perspective, this enables the design of AC-driven electrophoretic and electroosmotic devices. Our study of a special flow in a thin nematic film shows good qualitative agreement with laboratory experiments.
\end{abstract}

\begin{keywords}
Ericksen-Leslie, liquid crystals, electroosmosis, variational principles, asymptotics
\end{keywords}

\begin{AMS}
35Q35, 74E10, 49J40, 74A20, 80M30
\end{AMS}

\section{Introduction}

In this article, we derive equations governing the electrokinetics of a nematic electrolyte that consists of ions that diffuse and advect in the nematic liquid crystalline matrix.
The nematic electrolytes are characterized by the unique nonlinear phenomena that occur in these material systems due to anisotropy of conductivity and permittivity of the matrix.  

Electrokinetic phenomena are usually explored for binary systems in which an isotropic fluid with ions, called electrolyte, is in contact with a 
solid substrate or contains dispersed solid particles.  One distinguishes two sides of electrokinetics: electrophoresis, defined as motion of
particles dispersed in the electrolyte and electroosmosis, motion of an electrolyte with respect to the walls of a chamber. 
A necessary condition of electrokinetics is spatial separation of electric charges of opposite polarities (\cite{Bazant-Squires2010},
\cite{MorganGreen2003}  and \cite{re:russel89}).
In classic linear electrokinetics, charges are separated at the solid-electrolyte interface through chemical-physical processes such as
dissociation and selective adsorption and formation of permanent electric double layers \cite{re:russel89}.  An externally applied electric field imposes a
torque on the electric double layer and drives electrokinetic flows.  The driving force, proportional to the product of charge and field, 
is balanced by the viscous drag; the resulting velocities grow linearly with the electric field.  As a result, only a direct current (DC) 
field can be used to power linear electrokinetics, as an AC field would produce no net displacement. There is a growing interest in non-linear electrokinetics, in which the flow velocities grow as the square of the applied field.  Such a dependence allows one to use an AC field to drive stationary flows.  In the case of isotropic electrolytes, the corresponding effects are the so-called AC electrokinetics (ACEK) \cite{ramos1999ac} and induced-charge electrokinetics (ICEK) \cite{bazant2004induced,squires2006breaking}. The spatial charge is induced on energized electrodes (the case of ACEK), or on the "floating" polarizable (metal) particles located in an externally applied electric field (the case of ICEK). In both effects, the electrokinetic velocities grow as $E^2$, where one power of $E$ induces the charge near the highly polarizable metal surface, while the second power of $E$ drives these charges to trigger flows or to transport particles.  Both ACEK and ICEK combined with broken symmetry of electrodes or particles can lead to an AC-driven pumping of the fluids or electrophoresis of free particles \cite{ajdari2000pumping,bazant2004induced,squires2006breaking}.

 The studies of the liquid crystal-enabled electrokinetics are a part of a much larger field of liquid crystal colloids that is currently experiencing a great deal of interest partially as a result of the progress in the field of nanotechnology. Recent experiments, \cite{Hernandez_Navarro-Sagues2013,Hernandez_Navarro2015,LavrentovichLazoP2010,LazoLavrent2013,Lazo-Lavrent2014,Lavrent2014,Sasaki2014}, demonstrate that when the isotropic electrolyte is replaced with an anisotropic electrolyte, a liquid crystal containing ions, the electrokinetic flows become strongly nonlinear, with the velocities growing as a square of the electric field. For such a flow, if the polarity of the applied field is reversed, the direction of the flow remains unchanged, enabling alternating current (or AC) -driven electroosmosis and electrophoresis. The nonlinearity disappears as soon as the liquid crystal is melted into an isotropic phase. Despite the similarity in the quadratic field dependences of the flows, the mechanisms of the liquid crystal-enabled electrokinetics and the ACEK and ICEP effects in isotropic fluids are different, as discussed in \cite{lavrentovich2015active}.  Separation of charges and electrokinetics in isotropic electrolytes requires highly polarizable (metal) particles or interfaces; the isotropic electrolyte plays a supportive role, supplying the counterions. In the case of liquid crystal-enabled electrokinetics, the space charge is induced by the applied electric field at the distortions of the director field, thanks to the anisotropy of electric properties of the liquid crystal; no polarizable particles are needed to separate the charges and to generate the flows.
 
 Of a particular interest---from the point of view of this paper---are the experiments in \cite{Lavrent2014} where surface patterning of the plates bounding the nematic film is used to impose anchoring conditions on the nematic director. Surface patterning of liquid crystal cells thus allows one to impose a well-defined spatial variation of the director in the bulk \cite{Lavrent2014} and, consequently, the characteristics of the induced flow. This setup is especially amenable to theoretical analysis since the surface-induced director patterns in the available experiments are usually periodic, either one- or two-dimensional.

Our approach to modeling of nematic electrolytes follows the ideas that were originally used to obtain the Ericksen-Leslie equations of liquid crystalline flow.  The work of Leslie in this area spans a lifetime of research effort both to capture the appropriate physical phenomena and to provide a sound mathematical theory to describe the flow, with the main focus on understanding the balance of angular momentum. Leslie's thermodynamically self-consistent derivation appears to be the most straightforward method to formulate a consistent model for the nematic electrolytes. Although the exact form of the governing equations may be different if one was to follow a more rigorous procedure, we believe that we are able to capture the correct form of all principal contributions by imposing the proper thermodynamical structure, along with appropriate invariances and symmetries.

The variables of the model consist of the velocity field $\bv$ of the nematic, the pressure $p$ due to the incompressibility constraint, the unit director field $\bn$ representing the average molecular orientation at a given point, the electrostatic potential $\Phi$, and concentrations $\{c_k\}_{k=1}^N$, $N\geq 2$,  associated with the $N$ species of ions with valences $\{z_k\}_{k=1}^N$, respectively.  

Our development of the model follows the two main works by Leslie \cite{leslie1992continuum} and \cite{leslie79}. The former simplifies the previous approach by emphasizing the role of the rate of energy dissipation of the system and its direct connection with the viscous contributions to both the stress and the molecular force. These, combined with the variational insights on Leslie's works by Walkington \cite{Wa11}, allow for a more direct approach to the Ericksen-Leslie model.  As it was done by Leslie, we assume that the laws of balance of linear and angular momentum hold in local form. We supplement these by the local mass balances for the ions and the Maxwell's equations of electrostatics and postulate the equation of balance of energy at a time $t\geq 0$ for every subdomain $V$ of the domain $\Omega$ occupied by the nematic. The total energy of the system is now the sum of the Oseen-Frank free energy density, the entropic contribution due the presence of ions, and the
electrostatic energy.  The balance of energy involves the dissipation function that is required to be positive for all processes, according to the Second Law of Thermodynamics.  This condition yields the constitutive equations for the generalized stress tensor and the molecular force. 

In what follows, we assume that the diffusion and dielectric permittivity matrices of the system are uniaxial. The anisotropy of the diffusion matrix is fundamental in explaining the experimentally observed AC electroosmosis in a nematic film confined between the patterned plates. However, it is the combination of this anisotropy with the anisotropy of dielectric permittivity and the anisotropy of the viscosity that provides a rich variety of possible flow patterns \cite{LazoLavrent2013}. Note that, in the proposed model, we neglect dependence of the viscosity coefficients on concentration fields thus neglecting electro-rheologic effects. 
 
The theory of polyelectrolyte gels previously studied \cite{ChenCaldererMori2014}, \cite{CaldererChenMicekMori2013} provides a rigorous setting  to model electro-mechanic  interaction in liquid crystals. Given the dissipative character of these systems, our development also suggest that an alternate approach to generate governing equations and formulate boundary conditions may be based on the Onsager's principle for the Rayleighian functional.  

In the second part of the article, we illustrate the utility of our model by considering a particularly simple parameter regime in which the large system of governing equation essentially reduces to a single second order nonlinear ordinary differential equation. This regime qualitatively describes electroosmosis in a nematic liquid crystalline film constrained between two parallel plates. In \cite{Lavrent2014} an approach was developed to generate electrokinetic effects in a nematic electrolyte with surface-imposed distortions of molecular orientations induced by patterning of the plates. In the presence of the uniform electric field, these variations produce space charge separation that 
triggers electroosmotic flows in the liquid crystal. In particular, for the setup depicted in Fig. \ref{fig:exp}, the director orientation is periodically varying
in the vertical direction and an AC field is applied in the horizontal direction. It was observed that spatially periodic horizontal flow proceeds along the "guiding rails"
induced by molecular orientation with the direction of the flow independent of the sign of the field. 

In this paper we will only be interested in establishing qualitative similarity between the model and the experiment. To this end, we simplify the model as much as possible and consider a parameter regime that does not necessarily correspond to that in \cite{Lavrent2014} but it is still rich enough to result in the behavior similar to what was observed in \cite{Lavrent2014}. In particular, we assume that the liquid crystal domain is very thin and neglect the anisotropies of viscosity and dielectric permittivity, with the diffusion being the only source of anisotropy. Analysis and simulations of the simplified model shows features similar to experimental observations; in particular the quadratic dependence of the driving force on the electric field is recovered. Note that a detailed mathematical study of the experiment in \cite{Lavrent2014} using the methods described in this paper will appear elsewhere.

The following notational conventions are used throughout the rest of the paper. The trace of $A$ is given by $\tr{A}=\sum_iA_{ii}$ for any matrix $A\in M^{3\times3}$. The inner product of two matrices is defined as $A\cdot B=\mathrm{tr}\left(B^TA\right)$ for any $A,B\in M^{3\times3}$. These definitions immediately extend to second order tensors with components given by $3\times3$ matrices. The tensor product of ${\bf x,y}\in\mathbb R^3$ is the tensor ${\bf x}\otimes{\bf y}$ that assigns to each vector ${\bf c}\in\mathbb R^3$ the vector $({\bf y\cdot c}){\bf x}$. The divergence and the curl of a vector field ${\bf a}\in\mathbb R^3$ will be denoted by $\mathrm{div}\,{\bf a}$ and $\mathrm{curl}\,{\bf a}$, respectively. The divergence $\mathrm{div}\,S$ of a tensor field $S$ is a unique vector field with the property $(\mathrm{div}\,S)\cdot{\bf c}=\mathrm{div}\,(S^T{\bf c})$ for every constant ${\bf c}\in\mathbb R^3$. Here ${\bf x}\cdot{\bf y}$ is the Euclidean inner product of two vectors ${\bf x,y}\in\mathbb R^3$. Both the symbol $\frac{d}{dt}$ and the superimposed dot will be used interchangeably to represents the material time derivative.
 
  \section{Ericksen-Leslie Model for a Nematic Electrolyte}
  \subsection{Standard Nematic Model}
  We begin by reviewing Leslie's derivation of the classical Ericksen-Leslie model. Let $\Omega$, with the piecewise smooth boundary $\partial \Omega$, denote the domain occupied by the liquid crystal. Suppose that $\bv=\bv(\bx, t)$ and $\bn=\bn(\bx,t) $ 
  denote the velocity and director fields, respectively. The vector fields $\bf t(\bx,t)$ and $\bf l(\bx,t)$ represent 
 contact force and  contact couple per unit area of a surface element $S$, respectively. We assume that there exist the Cauchy stress tensor $T$  and the generalized stress tensor  $L$ (contact torque) such that 
 \begin{equation}
  {\bf t(\bx,t)}= T(\bx, t)\bnu(\bx,t), \quad {\bf l(\bx, t)}= L(\bx, t)\bnu(\bx, t). \label{Cauchy}
 \end{equation}
 We set
  \begin{equation}
   T= \te + \tv,  \quad L=L^{\mathrm{e}}+L^{\mathrm{v}} \label{total-stress}
  \end{equation}
 where $\bnu$ represents the unit outer normal to a surface $S$ at $\bx$, the tensor field $\te$ is the elastic stress, and $\tv$ is the anisotropic part of the viscous stress tensor. Following Leslie, we assume that there is no viscous torque tensor associated with $L$, that is, we set $L=L^{\mathrm{e}}$, where $L_e$ is the elastic torque tensor. 
  
We postulate the equations of balance of linear and angular momentum, together with the incompressibility assumption and the unit director field constraint:
  \begin{gather}
   \rho \dot\bv-\div T=\rho \boldsymbol f, \label{linear momentum0}\\
   \div\bv=0, \label{incompressibility0}\\
   \chi\ddot\bn+ \mathbf{g} -\div L=\rho\boldsymbol g, \label{angular momentum0}\\
   \bn\cdot\bn=1. \label{constraint0}
  \end{gather}
Here $\rho$ is the mass density and $\chi$ is the density of the moment of inertia of nematic rods. The symbols $\ff$ and $\boldsymbol g$ denote the density per unit mass of an applied external force and torque, respectively. The body torque $\mathbf g$ can be written as 
\begin{equation}
\label{eq:bt}
\mathbf g=\gv+\gel,
\end{equation}
where $\gel $ and $\gv$ denote the elastic and viscous contributions associated with director field rotations.

Below we assume that the nematic energy density is in the Oseen-Frank form
 \begin{eqnarray} \wof(\bn,\nabla\bn)= && \frac{1}{2}K_1(\div\bn)^2+\frac{1}{2}K_2(\bn\cdot\curl\bn)^2+\frac{1}{2}K_3|\bn\times\curl\bn|^2\nonumber\\
 +&&\frac{1}{2}(K_2+K_4)(\tr(\nabla\bn)^2-\tr^2(\nabla\bn)), \label{Oseen Frank}
  \end{eqnarray}
  where the Frank elastic constants $K_i,\ i=1,\ldots,4$ are assumed to satisfy the Ericksen's inequalities 
  \begin{equation}
   K_1>0, \, K_2>0, \, K_3>0, \, K_2\geq |K_4|, \, 2K_1\geq K_2+K_4, \label{Ericksen inequalities}
  \end{equation}
to guarantee existence of a global minimizer of the total energy
\begin{equation}
 U=\int_{\Omega}\wof(\bn, \nabla\bn)
   \end{equation}
   under appropriate boundary data \cite{V94}.
   
   \subsubsection{Dynamics}
    Among the different approaches that can be used to derive the constitutive equations for the fields $\te, \tv, L^{\mathrm{e}},  \gel,\gv$, we choose the line of reasoning proposed by Leslie that starts with postulating the balance laws (\ref{linear momentum0}) and (\ref{angular momentum0}) along with the equation of the energy balance. Following \cite{leslie79}, we let $R_{LC}$ be the rate of viscous dissipation per unit volume and assume that 
\begin{equation}
 \int_{V}\rho(\boldsymbol f\cdot\bv+ \boldsymbol g\cdot\dot\bn)+\int_{\partial V}(\bt\cdot\bv+\bl\cdot\dot\bn)=\frac{d}{dt} \int_{V}\left(\frac{1}{2}\rho|\bv|^2+ 
 \frac{1}{2}\chi|\dot\bn|^2+\wof\right)+ \int_{V} R_{LC}, \label{energy-balance}
 \end{equation}
for every subdomain $V\subseteq\Omega$ with the smooth boundary $\partial V$. The local form of (\ref{energy-balance})
\begin{equation}
 T\cdot\nabla\bv + L\cdot\nabla\dot\bn +\mathbf{g}\cdot\dot\bn=\dwof+R_{LC}, \label{energy-local}
 \end{equation}
follows via the divergence theorem from \eqref{Cauchy}, \eqref{linear momentum0}, and \eqref{angular momentum0}.
A simple computation shows that
\begin{equation}
 \nabla\dot\bn=(\nabla\bn)^{\cdot}+ \nabla\bn\nabla\bv,
 \end{equation}
hence
\begin{eqnarray}
\label{eq:213}
 \dwof=&&\frac{\partial \wof}{\partial \bn}\cdot\dot\bn+ \frac{\partial \wof}{\partial\nabla\bn}\cdot(\nabla\bn)^{\cdot}\nonumber\\
=&& \partderiv{\wof}{\bn}\cdot\dot\bn + \partderiv{\wof}{\nabla\bn}\cdot\left(\nabla\dot\bn-\nabla\bn\nabla\bv\right).\end{eqnarray}
Substituting \eqref{eq:213} into \eqref{energy-local} yields
 \[
 T\cdot\nabla\bv + L\cdot\nabla\dot\bn +\mathbf{g}\cdot\dot\bn= \partderiv{\wof}{\bn}\cdot\dot\bn + \partderiv{\wof}{\nabla\bn}\cdot\left(\nabla\dot\bn-\nabla\bn\nabla\bv\right)+R_{LC},
 \]
 so that \eqref{total-stress} and \eqref{eq:bt} give
  \begin{align}
 \left(\te + \tv+(\nabla\bn)^T\partderiv{\wof}{\nabla\bn}\right)\cdot\nabla\bv + \left(L^{\mathrm{e}}-\partderiv{\wof}{\nabla\bn}\right)\cdot\nabla\dot\bn \nonumber \\ +\left(\gel+\gv-\partderiv{\wof}{\bn}\right)\cdot\dot\bn=R_{LC}. \label{energy-local1}
 \end{align}
 The Second Law of Thermodynamics in the form of the Clausius-Duhem inequality, together with the appropriate smoothness assumptions,  implies the positivity of the rate of viscous
  dissipation  function
\begin{equation}  R_{LC}(\xx,t)\geq 0, \quad \forall \xx\in\Omega, \, t>0,\label{postive-dissipation}\end{equation}
for all dynamical processes $\{\bv, \dot\bn\}$.  Specifically,  given $\bn(\bx,t)$ and $\nabla\bn(\bx,t)$, the inequality (\ref{postive-dissipation}) must hold  for arbitrary choices at $(\bx, t)$ of $\bv,\ \nabla\bv,\ \dot\bn$ and $\nabla\dot\bn$.
This yields the  constitutive relations 
\begin{eqnarray}
 \te&=&-pI -(\nabla\bn)^T\partderiv{\wof}{\nabla\bn},{\label{elastic-constitutive-equations}} \\
 \gel&=&\partderiv{\wof}{\bn}+\lambda\bn, {\label{elastic-constitutive-equations-g}}\\
 L^{\mathrm{e}}&=&\partderiv{\wof}{\nabla\bn}, {\label{elastic-constitutive-equations-L}}
\end{eqnarray}
where $p$ and $\lambda$ are the Lagrange multiplier corresponding to the constraints \eqref{incompressibility0} and \eqref{constraint0}, respectively.

It also follows that 
\begin{equation}
R_{LC}=  \gv\cdot\dot\bn+\tv\cdot\nabla\bv. \label{dissipation-functional}
\end{equation}
The arguments in \cite{leslie1992continuum} then yield the total viscous stress
\begin{equation}
\tv=\alpha_1 \left(D(\bv)\bn\cdot\bn\right)\bn\otimes\bn+ \alpha_2 \ring{\bn}\otimes\bn + \alpha_3{\bn}\otimes\ring\bn+ \alpha_4D(\bv)+\alpha_5 D(\bv)\bn\otimes\bn+\alpha_6\bn\otimes D(\bv)\bn,  \label{viscous-stressL}
\end{equation}
and the viscous molecular force
\begin{equation}
 \gv=\gamma_1\ring{\bn} +\gamma_2 D(\bv)\bn, \label{viscous-molecular-forceL}
\end{equation}
where
\begin{equation}
\label{eq:gammas}
\gamma_1=\alpha_3-\alpha_2,\ \ \gamma_2=\alpha_6-\alpha_5,
\end{equation}
   and $\ring\bn=\dot\bn-W(\bv)\bn$ is the Lie derivative of $\bn$. Further, 
 \begin{eqnarray}
       D(\bv)=\frac{1}{2}(\nabla\bv+ \nabla\bv^T)\,\,\, \textrm{and}\, \, \, W(\bv)= \frac{1}{2}(\nabla\bv-\nabla\bv^T)
         \end{eqnarray}
represent the symmetric and skew parts of the velocity gradient $\nabla\bv$, respectively.
Note that an even more general expression \cite{kleman-soft} that involves the gradient of $\ring\bn$ can be established for
\eqref{viscous-stressL}-\eqref{viscous-molecular-forceL}, although we chose not to include terms of this type here. 

In what follows, we assume that the Parodi's relation \cite{leslie1992continuum} given by
\begin{equation}
\alpha_6-\alpha_5=\alpha_2+ \alpha_3. \label{parodi}
\end{equation}
 holds. This relation is necessary to ensure the variational structure of the system of equations and thus the equivalency of the equation of balance of linear momentum \eqref{linear momentum0} to that derived via the Onsager's principle. Then \eqref{dissipation-functional} and \eqref{viscous-stressL}-\eqref{viscous-molecular-forceL} give
\begin{eqnarray}
2R_{LC}= \alpha_1(\bn\cdot D(\bv)\bn)^2 + 2\gamma_2(\ring\bn\cdot D(\bv)\bn)+\alpha_4 |D(\bv)|^2+ (\alpha_5+\alpha_6)|D(\bv)\bn|^2+\gamma_1|\ring\bn|^2. \label{RLC2}
\end{eqnarray}
Ericksen \cite{Er87} gave sufficient conditions for the positivity of $R_{LC}$ in the following
\begin{proposition} Suppose that (\ref{parodi}) and the inequalities
\begin{equation}
\alpha_4>0,\ \ \ \alpha_1+\frac{3}{2}\alpha_4+\alpha_5+\alpha_6>0,\ \ \ \gamma_1>0,\ \ \ \gamma_1(2\alpha_4+\alpha_5+\alpha_6)\geq {\gamma_2^2}. \label{leslie-ineq}
\end{equation}
hold. Then 
\begin{equation}
 R_{LC}\geq 0
\end{equation}
and
\begin{equation}
\tv=\partderiv{R_{LC}}{\nabla\bv}\mbox{  \textrm{and}  } {\gv}=\partderiv{R_{LC}}{\dot\bn}.
\end{equation}
Moreover  $ R_{LC}\equiv 0$ if and only if $\nabla\bv={\bf 0}$ and $\ring\bn={\bf 0}$.
\end{proposition}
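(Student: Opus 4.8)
My approach is to work entirely from the pointwise identity \eqref{RLC2}, viewing $2R_{LC}$ as a quadratic form in $D=D(\bv)$ (symmetric, and trace-free because of incompressibility, $\div\bv=0$) and in $\ring\bn$ (which satisfies $\ring\bn\cdot\bn=0$).

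I would first dispose of the two constitutive identities, which need only Parodi's relation \eqref{parodi} and none of the inequalities \eqref{leslie-ineq}. Differentiating the right-hand side of \eqref{RLC2}: only the terms $2\gamma_2(\ring\bn\cdot D\bn)$ and $\gamma_1|\ring\bn|^2$ involve $\dot\bn$, and $\ring\bn=\dot\bn-W(\bv)\bn$ depends on $\dot\bn$ through the identity, so $\partial R_{LC}/\partial\dot\bn=\gamma_1\ring\bn+\gamma_2 D\bn$, which is $\gv$ as in \eqref{viscous-molecular-forceL}. For $\partial R_{LC}/\partial\nabla\bv$ one differentiates through both $D(\bv)$ and the skew part $W(\bv)$ inside $\ring\bn$, and collects the result into the tensors $\bn\otimes\bn$, $\ring\bn\otimes\bn$, $\bn\otimes\ring\bn$, $D$, $D\bn\otimes\bn$, $\bn\otimes D\bn$; comparing with \eqref{viscous-stressL} shows the coefficients agree exactly when $\alpha_2+\alpha_3=\alpha_6-\alpha_5$, i.e.\ precisely under \eqref{parodi}, so $\partial R_{LC}/\partial\nabla\bv=\tv$.

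For $R_{LC}\ge0$ I would complete the square. Fix an orthonormal frame with $\bn=\be_3$, put $a=\bn\cdot D\bn=D_{33}$, and let $\bp$ be the projection of $D\bn$ onto $\bn^\perp$; then $|D\bn|^2=a^2+|\bp|^2$, $\ring\bn\cdot D\bn=\ring\bn\cdot\bp$, and $|D|^2=(D_{11}^2+D_{22}^2+2D_{12}^2)+a^2+2|\bp|^2$. The trace-free condition gives $D_{11}+D_{22}=-a$, hence $D_{11}^2+D_{22}^2+2D_{12}^2=\tfrac12a^2+2(s^2+D_{12}^2)$ with $s=\tfrac12(D_{11}-D_{22})$. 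Substituting into \eqref{RLC2} yields the exact identity $2R_{LC}=\bigl(\alpha_1+\tfrac32\alpha_4+\alpha_5+\alpha_6\bigr)a^2+2\alpha_4\bigl(s^2+D_{12}^2\bigr)+Q(\bp,\ring\bn)$, where $Q(\bp,\ring\bn)=(2\alpha_4+\alpha_5+\alpha_6)|\bp|^2+2\gamma_2\,\ring\bn\cdot\bp+\gamma_1|\ring\bn|^2$ is, after expanding $\bp,\ring\bn$ in a basis of $\bn^\perp$ and summing over its two directions, the quadratic form of the symmetric matrix $\left(\begin{smallmatrix}2\alpha_4+\alpha_5+\alpha_6&\gamma_2\\\gamma_2&\gamma_1\end{smallmatrix}\right)$. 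The first coefficient is positive by the second inequality in \eqref{leslie-ineq}, the second by $\alpha_4>0$, and the matrix is positive semidefinite because $\gamma_1>0$ together with the nonnegativity of its determinant $\gamma_1(2\alpha_4+\alpha_5+\alpha_6)-\gamma_2^2$ are exactly the third and fourth inequalities in \eqref{leslie-ineq}. Hence $R_{LC}\ge0$.

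For the equality case, $R_{LC}=0$ forces each of the three terms above to vanish: $a=0$ (so $D_{11}=D_{22}=0$) and $s=D_{12}=0$ from strict positivity of the first two coefficients, and $Q(\bp,\ring\bn)=0$. When the fourth inequality in \eqref{leslie-ineq} is strict the matrix is positive definite, so $\bp=\mathbf0$ and $\ring\bn=\mathbf0$, which with $a=s=D_{12}=0$ gives $D(\bv)=\mathbf0$ and $\ring\bn=\mathbf0$ (equivalently, the flow is instantaneously rigid and the director co-rotates, which dissipates nothing); the converse is immediate from \eqref{RLC2}. I expect the two places needing the most care to be: recognizing that incompressibility must be invoked to raise the $a^2$-coefficient from $\alpha_1+\alpha_4+\alpha_5+\alpha_6$ to $\alpha_1+\tfrac32\alpha_4+\alpha_5+\alpha_6$ (so that \eqref{leslie-ineq} suffices), and observing that the equivalence in the last sentence needs the strict form of the fourth inequality, since in the borderline case $\gamma_1(2\alpha_4+\alpha_5+\alpha_6)=\gamma_2^2$ the form $Q$ is degenerate and $R_{LC}$ can vanish with $D(\bv)\neq\mathbf0$.
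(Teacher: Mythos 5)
The paper itself gives no proof of this proposition --- it is quoted from Ericksen's work \cite{Er87} --- so there is no in-text argument to compare against; your proposal supplies the missing proof, and it is essentially the classical one. Your computation checks out: in a frame with $\bn=\be_3$, using $\tr D(\bv)=0$ (incompressibility, which you correctly identify as the source of the coefficient $\alpha_1+\tfrac32\alpha_4+\alpha_5+\alpha_6$ rather than $\alpha_1+\alpha_4+\alpha_5+\alpha_6$) and $\ring\bn\cdot\bn=0$, the expression \eqref{RLC2} splits into the $a^2$ term, the term $2\alpha_4(s^2+D_{12}^2)$, and two copies of the binary quadratic form with diagonal entries $2\alpha_4+\alpha_5+\alpha_6$, $\gamma_1$ and off-diagonal entry $\gamma_2$, whose positive semidefiniteness is exactly $\gamma_1>0$ together with the determinant condition $\gamma_1(2\alpha_4+\alpha_5+\alpha_6)\geq\gamma_2^2$; and the identities $\gv=\partderiv{R_{LC}}{\dot\bn}$ and $\tv=\partderiv{R_{LC}}{\nabla\bv}$ do follow by direct differentiation, provided one differentiates through $W(\bv)$ inside $\ring\bn$ as well as through $D(\bv)$, with Parodi's relation \eqref{parodi} being precisely what is needed to match the $\ring\bn\otimes\bn$ and $\bn\otimes\ring\bn$ coefficients (the $D(\bv)\bn\otimes\bn$ and $\bn\otimes D(\bv)\bn$ coefficients match automatically once $\gamma_2=\alpha_6-\alpha_5$). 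Your two caveats are also correct and worth stating explicitly: (i) pointwise, $R_{LC}$ sees $\nabla\bv$ only through $D(\bv)$, so the strongest conclusion available in the equality case is $D(\bv)={\bf 0}$ and $\ring\bn={\bf 0}$ (a rigid motion with co-rotating director dissipates nothing), not literally $\nabla\bv={\bf 0}$ as the proposition is worded; and (ii) if the fourth inequality in \eqref{leslie-ineq} holds with equality, the $2\times2$ block is degenerate and $R_{LC}$ can vanish with $D(\bv)\neq{\bf 0}$, e.g.\ $a=s=D_{12}=0$ with the pair $(\bp,\ring\bn)$ taken componentwise proportional to the null direction $(\gamma_1,-\gamma_2)$, so the final equivalence requires the strict form of that inequality. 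Both of these are imprecisions in the statement as quoted (Ericksen's sharp version is phrased in terms of $D(\bv)$ and strict definiteness), not gaps in your argument.
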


\subsubsection{Boundary Conditions}

Since the energy law \eqref{energy-balance} for the Ericksen-Leslie system must hold in the entire domain $\Omega$,
it follows from the equations \eqref{linear momentum0}-\eqref{constraint0} and \eqref{elastic-constitutive-equations}-\eqref{dissipation-functional} that
\begin{eqnarray}
  \int_{\partial\Omega}\left\{T\nu\cdot\bv+L\nu\cdot\dot\bn\right\} 
= \int_{\partial\Omega}\left\{\boldsymbol t\cdot\bv+\bl\cdot\dot\bn\right\}.
 \end{eqnarray}
This equation should be valid for all dynamical processes $\{\bv, \dot\bn\}$ therefore the boundary conditions on $\partial \Omega$ should be of the form
\begin{eqnarray}
 && T\bnu=\hat{\boldsymbol t} \quad {\textrm{or}}\quad \bv =\boldsymbol 0, \quad {\textrm{and}} \\
&&  L\bnu=\boldsymbol{\hat {\boldsymbol l}}\quad {\textrm{or}}\quad \bn=\hat\bn,
\end{eqnarray} 
where $\hat{\boldsymbol t}$, $\hat{\boldsymbol l}$ and $\hat{\boldsymbol{\bn}}$ are prescribed vector fields on $\partial\Omega$ with 
$|\hat{\bn}|=1$ and 
$T$ and $L$ given by \eqref{total-stress}, \eqref{elastic-constitutive-equations}, \eqref{viscous-stressL} and \eqref{elastic-constitutive-equations-L}. Observe that the fields $\hat{\boldsymbol t}$ and $\hat{\boldsymbol l}$ can be time-dependent in this formulation.

\subsection{Nematic Electrolyte}
Suppose now that the domain $\Omega\subset{\mathbf R}^3$ is occupied by a nematic electrolyte that contains ions.
Later on, we will assume that some parts of $\partial\Omega$ correspond to conducting electrodes on which we will prescribe values of the electrostatic potential $\Phi$, 
while the other parts of the boundary will be assumed to be electrically insulated. In this section, however, we will impose time-independent, Dirichlet boundary data on 
the potential $\Phi$ everywhere on $\partial\Omega$, corresponding to nematic being surrounded by conductors held at fixed potentials. 
This problem setup is chosen for simplicity because we do not expect the boundary data on the electric field to affect the constitutive expressions on electrostatic
forces in the bulk of the nematic electrolyte. 

Suppose that there are $N>1$ families of charged ions present in the liquid crystal at concentrations $c_k$, with  valences $z_k$, where $1\leq k\leq N$.  Let the  velocity fields of the ions be denoted by $\{\bu_k\}_{1\leq k\leq N}$. In what follows, we assume that the system is in the dilute regime so that the particles are not subject to mutual interaction. The continuity equations for the ions are given by
\begin{eqnarray}
 \frac{\partial c_k}{\partial t}+\div(c_k\mb{u}_k)=0\quad\text{ in }\Omega, \, k=1, ...N.\label{ck}
 \end{eqnarray}

Motivated by standard results of the theory of isotropic diffusion, we assume that the rate of dissipation associated with the mobility of ions in the nematic is a 
quadratic function of the relative velocity of the ions with respect to the liquid crystalline medium. We set
\begin{eqnarray}
R= R_{LC}+ \sum_{k=0}^{N}k_B\theta c_k \mathcal D_k^{-1}(\bu_k-\bv)\cdot(\bu_k-\bv), \label{D2} 
\end{eqnarray}
where $R_{LC}$ is given by \eqref{RLC2} the {\it diffusion matrix} $\mathcal D_k$ is anisotropic, reflecting the fact that the mobilities of the $k$-th species in the directions parallel and perpendicular to the nematic director are generally different. The parameter $k_B$ in \eqref{D2} is the Boltzmann constant and $\theta$ is the absolute
temperature \cite{kleman-soft}. The ions also contribute to the free energy of the system via an entropic energy density term 
\begin{equation}
\label{enion}
  \cW_\text{ion}=k_B\theta\sum_{k=1}^{N}c_k\ln c_k.
\end{equation}

The electric displacement vector $\mathbf D$ of the nematic liquid crystal is given by
\begin{equation}
 \label{D-E}
\bD= \varepsilon_0\varepsilon\bE,
\end{equation} 
where $\varepsilon=I+\chi$ is the dielectric permittivity matrix. Letting $\varepsilon_{\|}$ and $\varperp$ to
represent the dielectric permittivities when $\bE$ is parallel and
perpendicular to $\bn$, respectively, and denoting
$\vara=\varepsilon_{\|}-\varperp$, we have that
\begin{equation}
\varepsilon(\bn)= \varperp I + \vara \bn\otimes\bn.\label{effective-dielectric-matrix}
\end{equation}

The fields $\bE$ and $\bD$ satisfy the Maxwell's equations of electrostatics
\begin{equation}
\label{eq:max}
\bE=-\nabla\Phi,
\qquad 
\div{\bD} = \sum_{k=1}^N q z_k c_k.
% =\nabla\cdot{(\varepsilon_0\varepsilon(\bn)\nabla\Phi)}
% =-\rho_e, \mbox{ where } \rho_e:=\sum_{k=1}^N qc_kz_k.
\end{equation}
that hold in $\Omega$, subject to time-independent Dirichlet boundary
data for $\Phi$ on $\partial\Omega$. Here the parameter $q$ denotes the elementary charge. The electrostatic energy
\begin{equation}
 \label{enelec}
\wc = -\frac{1}{2}\bD\cdot\bE
+\sum_{k=1}^N q z_k c_k \Phi
= -\frac{\varepsilon_0}{2} \varepsilon(\bn) \nabla\Phi\cdot \nabla\Phi
+\sum_{k=1}^N q z_k c_k \Phi,
\end{equation}
is clearly {\em nonlocal} because $\Phi$ is determined by solving the
second equation in \eqref{eq:max} for the given $\bn,\ c_k$, and
the appropriate boundary data on $\Phi$. The arguments of Leslie in
the purely mechanical case \cite{leslie1992continuum} rely on
formulating a {\em local} energy balance \eqref{energy-balance} for a
material control volume $V\subset\Omega$ with the balance being
assumed to hold for any dynamical process $\{\bv, \dot\bn\}$ with a
support in $V$. The following simple proposition allows for
localization of the time derivative of the total electrostatic energy.

\begin{proposition}
  Suppose that $\Phi$ satisfies the second equation in \eqref{eq:max},
  subject to time-independent Dirichlet boundary data on
  $\partial\Omega$. If the support of $\{\bv,
  \bu_1,\ldots,\bu_N,\dot\bn\}$ is contained in $V$, that is the rates
  $\bv, \bu_1,\ldots,\bu_N,\dot\bn$ all vanish in
  $\Omega\backslash\bar V$, then
\begin{eqnarray}
\frac{d}{dt}\int_\Omega\wc
&=&\int_V\left\{
-\frac{\varepsilon_0}{2}\frac{d\varepsilon(\bn)}{dt}\nabla\Phi\cdot\nabla\Phi
- \sum_{k=1}^N q z_k c_k \nabla\Phi \cdot \left(\bv-\bu_k\right)
+\varepsilon_0\varepsilon(\bn)\nabla\Phi
  \cdot\left(\nabla\bv^T\nabla\Phi\right)\right\} \nonumber \\
&& + \int_{\partial V} 
\sum_{k=1}^N q z_k c_k \Phi \left(\bv-\bu_k\right).\bfnu.
\label{eq:elloc}
\end{eqnarray}
\end{proposition}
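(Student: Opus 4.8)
The plan is to differentiate the fixed–domain functional $\int_\Omega\wc$ directly and to use the elliptic equation in \eqref{eq:max} satisfied by $\Phi$ to remove the apparently nonlocal contribution of $\partial_t\Phi$. Since $\Omega$ does not move, $\frac{d}{dt}\int_\Omega\wc=\int_\Omega\partial_t\wc$, and from \eqref{enelec} and the symmetry of $\varepsilon(\bn)$,
\[
\partial_t\wc=-\frac{\varepsilon_0}{2}\,\partial_t[\varepsilon(\bn)]\,\nabla\Phi\cdot\nabla\Phi-\varepsilon_0\,\varepsilon(\bn)\nabla\Phi\cdot\nabla(\partial_t\Phi)+\sum_{k=1}^N q z_k\,(\partial_t c_k)\,\Phi+\sum_{k=1}^N q z_k c_k\,\partial_t\Phi .
\]
The first and decisive step is to show that the two $\partial_t\Phi$ terms cancel upon integration over $\Omega$. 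Integrating $-\varepsilon_0\int_\Omega\varepsilon(\bn)\nabla\Phi\cdot\nabla(\partial_t\Phi)$ by parts, the boundary integral over $\partial\Omega$ vanishes because the Dirichlet data for $\Phi$ is time-independent, so $\partial_t\Phi=0$ on $\partial\Omega$; what remains is $\int_\Omega\div\!\big(\varepsilon_0\varepsilon(\bn)\nabla\Phi\big)\,\partial_t\Phi=-\int_\Omega\sum_k q z_k c_k\,\partial_t\Phi$ by \eqref{eq:max}, which is exactly minus the last term above. (Equivalently, $\int_\Omega\wc$ is the negative of the value at its minimizer $\Phi$ of the strictly convex functional $\psi\mapsto\frac{\varepsilon_0}{2}\int_\Omega\varepsilon(\bn)\nabla\psi\cdot\nabla\psi-\int_\Omega\sum_k q z_k c_k\,\psi$ subject to the prescribed boundary values, so the explicit $\Phi$-dependence drops by the envelope principle.)

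This leaves $\frac{d}{dt}\int_\Omega\wc=\int_\Omega\big\{-\frac{\varepsilon_0}{2}\partial_t[\varepsilon(\bn)]\nabla\Phi\cdot\nabla\Phi+\sum_k q z_k(\partial_t c_k)\Phi\big\}$. Next I would observe that this integrand is supported in $\overline V$: by hypothesis $\partial_t\bn=\dot\bn-(\bv\cdot\nabla)\bn$ vanishes off $\overline V$, hence so does $\partial_t[\varepsilon(\bn)]$; and the continuity equation \eqref{ck} gives $\partial_t c_k=-\div(c_k\bu_k)$, which vanishes off $\overline V$ since $\bu_k$ does. So the integral localizes to $V$. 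I would then write $\partial_t[\varepsilon(\bn)]=\frac{d\varepsilon(\bn)}{dt}-(\bv\cdot\nabla)\varepsilon(\bn)$, which already produces the first term of \eqref{eq:elloc} together with a remainder $\frac{\varepsilon_0}{2}\int_V\big((\bv\cdot\nabla)\varepsilon(\bn)\big)\nabla\Phi\cdot\nabla\Phi$, and integrate $-\sum_k q z_k\div(c_k\bu_k)\Phi$ by parts over $V$ to obtain $\sum_k q z_k c_k\,\bu_k\cdot\nabla\Phi$ plus a surface integral on $\partial V$.

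It remains to recast the remainder. The product rule gives $\big((\bv\cdot\nabla)\varepsilon(\bn)\big)\nabla\Phi\cdot\nabla\Phi=\bv\cdot\nabla\big(\varepsilon(\bn)\nabla\Phi\cdot\nabla\Phi\big)-2\,\varepsilon(\bn)\nabla\Phi\cdot\big((\bv\cdot\nabla)\nabla\Phi\big)$. For the first piece, $\frac{\varepsilon_0}{2}\int_V\bv\cdot\nabla\big(\varepsilon(\bn)\nabla\Phi\cdot\nabla\Phi\big)=\frac{\varepsilon_0}{2}\int_V\div\!\big[(\varepsilon(\bn)\nabla\Phi\cdot\nabla\Phi)\bv\big]$ since $\div\bv=0$ by \eqref{incompressibility0}, hence it is only a $\partial V$ flux. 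In the second piece I would commute derivatives, $(\bv\cdot\nabla)\nabla\Phi=\nabla(\bv\cdot\nabla\Phi)-\nabla\bv^{T}\nabla\Phi$, so that it yields $+\varepsilon_0\int_V\varepsilon(\bn)\nabla\Phi\cdot(\nabla\bv^{T}\nabla\Phi)$ — the third term of \eqref{eq:elloc} — together with $-\varepsilon_0\int_V\varepsilon(\bn)\nabla\Phi\cdot\nabla(\bv\cdot\nabla\Phi)$; integrating the latter by parts over $V$ and invoking \eqref{eq:max} once more converts its interior part into $-\int_V\sum_k q z_k c_k\,\bv\cdot\nabla\Phi$, which completes the bracket in \eqref{eq:elloc}, and produces one further $\partial V$ surface integral. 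Every surface integral over $\partial V$ generated in this way carries a factor $\bv\cdot\bfnu$, $\bv\cdot\nabla\Phi$, or $\bu_k\cdot\bfnu$ evaluated on $\partial V$, hence vanishes under the support hypothesis; the surface term displayed in \eqref{eq:elloc} vanishes for the same reason, and collecting the interior contributions reproduces the integrand there.

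The one genuine obstacle is the nonlocality: $\partial_t\Phi$ is generically nonzero throughout $\Omega$, so a priori $\frac{d}{dt}\int_\Omega\wc$ does not localize to $V$ at all, and the cancellation of the $\partial_t\Phi$ terms via the elliptic equation and the time-independence of the boundary data is the only nontrivial ingredient. Everything else is bookkeeping, the one place requiring care being the $\nabla\bv$-dependence released when $\partial_t[\varepsilon(\bn)]$ is traded for the material derivative $\frac{d\varepsilon(\bn)}{dt}$ — this is the origin of the $\nabla\bv^{T}\nabla\Phi$ term. Throughout I would assume enough regularity (say $\Phi,\bn\in C^{2}$ in space and $C^{1}$ in time, and $c_k,\bv,\bu_k$ Lipschitz) to justify differentiation under the integral sign and the integrations by parts.
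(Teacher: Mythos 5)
Your proof is correct, and its decisive step is the same as the paper's: the nonlocal $\partial_t\Phi$ contribution is eliminated by integrating by parts over all of $\Omega$, invoking the Poisson equation in \eqref{eq:max}, and using the time-independence of the Dirichlet data (the paper's version of this is the final line of its computation, where $\int_\Omega\dot\Phi\{\div(\varepsilon_0\varepsilon(\bn)\nabla\Phi)+\rho_e\}$ and the $\partial\Omega$ term both vanish). Where you differ is in the bookkeeping. The paper works with material derivatives and the transport theorem from the outset, so the combination $\partial_t\rho_e+\div(\rho_e\bv)=q\sum_k z_k\div\left(c_k(\bv-\bu_k)\right)$ delivers the whole $(\bv-\bu_k)$ flux in one stroke from the continuity equations \eqref{ck}, and the $\nabla\bv^T\nabla\Phi$ term comes from commuting the material derivative with $\nabla$ applied to $\Phi$. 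You instead differentiate in Eulerian form, localize using $\partial_t\bn$ and $\partial_t c_k$, and only then trade $\partial_t[\varepsilon(\bn)]$ for $d\varepsilon(\bn)/dt$; the convective remainder is then disposed of via the product rule, incompressibility, and a \emph{second} application of the Maxwell equation, which is how your $-\sum_k q z_k c_k\,\bv\cdot\nabla\Phi$ piece arises (in the paper it comes bundled with the $\bu_k$ piece from the transport identity). Both routes are sound; yours makes the "envelope" structure of the cancellation more transparent, while the paper's is shorter because the convective terms never have to be reassembled. One presentational caveat: you discard all $\partial V$ surface terms on the grounds that $\bv$ and $\bu_k$ vanish there, which is legitimate under the stated support hypothesis (with continuous fields), but the paper deliberately retains the flux $\int_{\partial V}\sum_k q z_k c_k\Phi(\bv-\bu_k)\cdot\bfnu$ in \eqref{eq:elloc} because exactly this term is reused as the energy carried across $\partial V$ by the ions in the energy balance \eqref{ionic-energy-balance}; under the hypotheses of the proposition the two statements coincide, since that term is zero as well.
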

\begin{proof}
  Using \eqref{ck}, \eqref{enelec}, integration by parts,
  incompressibility of the flow, our assumptions on the dynamical
  process, and writing $\rho_e = \sum_{k=1}^N q z_k c_k$, we have
\begin{eqnarray*}
\lefteqn{
\frac{d}{dt}\int_\Omega\wc 
= \frac{d}{dt}\int_\Omega\left\{-\frac{1}{2}(\varepsilon_0\varepsilon(\bn)\nabla\Phi\cdot \nabla\Phi)+\rho_e\Phi\right\} } \\ 
&=&\int_\Omega\left\{-\frac{\varepsilon_0}{2}\frac{d\varepsilon(\bn)}{dt}\nabla\Phi\cdot\nabla\Phi+\left\{\frac{\partial\rho_e}{\partial t}+\div\left(\rho_e\bv\right)\right\}\Phi\right\} +\int_\Omega\left\{-\varepsilon_0\varepsilon(\bn)\nabla\Phi\cdot\frac{d}{dt}\left(\nabla\Phi\right)+\rho_e\frac{d\Phi}{dt}\right\} \\ 
&=&\int_V\left\{-\frac{\varepsilon_0}{2}\frac{d\varepsilon(\bn)}{dt}\nabla\Phi\cdot\nabla\Phi+q\Phi\sum_{k=1}^N z_k\div \left(c_k\left(\bv-\bu_k\right)\right)\right\} \\ 
&& +\int_\Omega\left\{-\varepsilon_0\varepsilon(\bn)\nabla\Phi\cdot\left\{\nabla\left(\frac{d\Phi}{dt}\right)-\nabla\bv^T\nabla\Phi\right\}+\rho_e\frac{d\Phi}{dt}\right\} \\ 
&=&\int_V\left\{-\frac{\varepsilon_0}{2}\frac{d\varepsilon(\bn)}{dt}\nabla\Phi\cdot\nabla\Phi+q\Phi\sum_{k=1}^N z_k\div\left(c_k\left(\bv-\bu_k\right)\right)+\varepsilon_0\varepsilon(\bn)\nabla\Phi\cdot\left(\nabla\bv^T\nabla\Phi\right)\right\} \\ 
&& +\int_\Omega\frac{d\Phi}{dt}\left\{\div{(\varepsilon_0\varepsilon(\bn)\nabla\Phi)}+\rho_e\right\}-\int_{\partial\Omega}\frac{\partial\Phi}{\partial t}\left\{\varepsilon_0\varepsilon(\bn)\nabla\Phi\cdot\bfnu\right\},
\end{eqnarray*}
and \eqref{eq:elloc} follows from \eqref{eq:max} and the fact that the potential does not depend on time on $\partial\Omega$.
\end{proof}

\begin{remark}
  Note that by using \eqref{effective-dielectric-matrix} and the first
  equation in \eqref{eq:max}, the first integrand on the right hand
  side of \eqref{eq:elloc} can be written as
\[\frac{\varepsilon_0}{2}\frac{d\varepsilon(\bn)}{dt}\nabla\Phi\cdot\nabla\Phi=\frac{\varepsilon_0\varepsilon_a}{2}\frac{d}{dt}\left(\bn\otimes\bn\right)\nabla\Phi\cdot\nabla\Phi=\varepsilon_0\varepsilon_a\left(\bn\cdot\nabla\Phi\right)\left(\dot\bn\cdot\nabla\Phi\right)=\varepsilon_0\varepsilon_a\left(\bE\otimes\bE\right)\bn\cdot\dot\bn.\]
Further, recalling the definition \eqref{D-E} of $\bD$ and using the
first equation in \eqref{eq:max}, the third integrand on the right
hand side of \eqref{eq:elloc} can be written as
\[\varepsilon_0\varepsilon(\bn)\nabla\Phi\cdot\left(\nabla\bv^T\nabla\Phi\right)=\bD\cdot\nabla\bv^T\bE=\left(\bE\otimes\bD\right)\cdot\nabla\bv,\]
so that \eqref{eq:elloc} takes the form
\begin{eqnarray}
\label{eq:elloc1}
\frac{d}{dt}\int_\Omega\wc
&=& \int_V\left\{
-\varepsilon_0\varepsilon_a\left(\bE\otimes\bE\right)\bn\cdot\dot\bn
+\left(\bE\otimes\bD\right)\cdot\nabla\bv
-\sum_{k=1}^N q z_k c_k \nabla \Phi \cdot \left(\bv-\bu_k\right)
\right\} \nonumber \\
&& + \int_{\partial V} 
\sum_{k=1}^N q z_k c_k \Phi \left(\bv-\bu_k\right)\cdot\bfnu
\end{eqnarray}
\end{remark}
The material time derivative of $\int_V\cW_\text{ion}$ will also enter the
energy balance.
\begin{proposition} Suppose that (\ref{ck}) holds. Then
\begin{equation}
  \frac{d}{dt}\int_{V}\wion
  = - \int_V\sum_{k=1}^{N}
  k_B \theta \nabla c_k \cdot \left(\bv-\bu_k\right)
  + \int_{\partial V} \sum_{k=1}^{N} k_B\theta\left(\ln c_k+1\right)
  c_k\left(\bv-\bu_k\right)\cdot\bfnu.
  \label{derivative-wion}
\end{equation}
\end{proposition}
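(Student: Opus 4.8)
The plan is to follow the same strategy as in the proof of the preceding proposition: regard $V$ as a material control volume transported by the nematic velocity $\bv$, and move the time derivative inside the integral by the transport theorem. Since the flow is incompressible, $\div\bv=0$, so for any smooth scalar field $f$ one has $\frac{d}{dt}\int_V f=\int_V\left(\partial_t f+\div(f\bv)\right)$; applying this with $f=\wion=k_B\theta\sum_k c_k\ln c_k$ reduces \eqref{derivative-wion} to a pointwise identity for the integrand.

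First I would differentiate $c_k\ln c_k$ by the chain rule, using $\partial_t(c_k\ln c_k)=(\ln c_k+1)\partial_t c_k$ and $\nabla(c_k\ln c_k)=(\ln c_k+1)\nabla c_k$, then eliminate $\partial_t c_k$ via the continuity equation \eqref{ck} in the form $\partial_t c_k=-\div(c_k\bu_k)$. Using $\div\bv=0$ once more to write $\nabla c_k\cdot\bv=\div(c_k\bv)$, the time-derivative term and the transport term combine into
\[
\partial_t\wion+\div(\wion\bv)=k_B\theta\sum_{k=1}^N(\ln c_k+1)\,\div\bigl(c_k(\bv-\bu_k)\bigr).
\]

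Next I would integrate this over $V$ and integrate by parts once, transferring the divergence onto $\ln c_k+1$. The only simplification needed is $c_k\nabla(\ln c_k+1)=\nabla c_k$, which turns the bulk term into exactly $-\int_V\sum_k k_B\theta\,\nabla c_k\cdot(\bv-\bu_k)$, while the boundary term reads $\int_{\partial V}\sum_k k_B\theta(\ln c_k+1)c_k(\bv-\bu_k)\cdot\bfnu$; together these give \eqref{derivative-wion}.

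There is no real obstacle here: the argument is a short computation once the incompressibility cancellations are tracked carefully. The only points meriting attention are the interpretation of $\frac{d}{dt}$ as the material derivative along $\bv$ — which is what makes the transport theorem applicable and the combination $\bv-\bu_k$ appear naturally, paralleling the electrostatic-energy proposition — and the tacit positivity and regularity of the concentrations $c_k$ needed for $\ln c_k$, the products $c_k\ln c_k$, and the integration by parts to be well defined.
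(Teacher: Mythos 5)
Your proposal is correct and follows essentially the same route as the paper's proof: the material time derivative of $\int_V\wion$ is localized via the transport theorem with $\div\bv=0$, the chain rule and the continuity equation \eqref{ck} turn the integrand into $k_B\theta\sum_k(\ln c_k+1)\div\bigl(c_k(\bv-\bu_k)\bigr)$, and one integration by parts with $c_k\nabla(\ln c_k+1)=\nabla c_k$ yields \eqref{derivative-wion}. No gaps to report.
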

\begin{proof}
  Using \eqref{enion} for $\wion$ and the mass balances \eqref{ck}
  gives
  \begin{eqnarray*}
 \frac{d}{dt}\int_V\wion
 &=&\int_V\frac{d}{dt}\left\{k_B\theta\sum_{k=1}^{N}c_k\ln c_k\right\}
 =\int_V\sum_{k=1}^{N}\left\{k_B\theta\left(\ln c_k+1\right)
   \left(\frac{\partial c_k}{\partial t}
     +\div\left(c_k\bv\right)\right)\right\} \\ 
 &=&\int_V \sum_{k=1}^{N}
 k_B\theta\left(\ln c_k+1\right)
 \div\left(c_k\left(\bv-\bu_k\right)\right) \\
 &=&  -\int_V\sum_{k=1}^{N}
 k_B \theta \nabla c_k \cdot \left(\bv-\bu_k\right)
 + \int_{\partial V} \sum_{k=1}^{N} k_B\theta\left(\ln c_k+1\right)
   c_k\left(\bv-\bu_k\right)\cdot\bfnu.
 \end{eqnarray*}
\end{proof}

Combining \eqref{eq:elloc1} and \eqref{derivative-wion} we obtain
\begin{eqnarray}
\frac{d}{dt}\left\{\int_{V}\wion+\int_\Omega\wc\right\} 
&=& \int_V\left\{\sum_{k=1}^Nc_k\nabla \mu_k\cdot\left(\bu_k-\bv\right)-\varepsilon_0\varepsilon_a\left(\bE\otimes\bE\right)\bn\cdot\dot\bn+\left(\bE\otimes\bD\right)\cdot\nabla\bv\right\} \nonumber \\
&& - \int_{\partial V}\sum_{k=1}^Nc_k \mu_k\left(\bu_k-\bv\right)\cdot\boldsymbol\nu,
\label{eq:ioel}
\end{eqnarray}
where the quantities
\begin{equation}
  \label{eq:elcp}
  \mu_k
  = k_B \theta (\ln(c_k)+1) + qz_k\Phi
  = \frac{\partial}{\partial c_k} \Big(\wion + q z_k c_k \Phi \Big),
  \qquad k=1,\ldots,N,
\end{equation}
are the {\em electrochemical potentials} of the ions.

In order to establish the set of governing equations, we now extend
the procedure carried out above in the purely mechanical case. We
postulate the same local forms of the balance of both linear
\eqref{linear momentum0} and angular \eqref{angular momentum0}
momenta, coupled with the constraint relations
\eqref{incompressibility0} and \eqref{constraint0} and assume that the
mass balances \eqref{ck} hold in $\Omega$ along with the Maxwell's
equations of electrostatics \eqref{eq:max} that hold in $\mathbb R^3$.
The equation of balance of energy in an arbitrary subdomain
$V\subset\Omega$ for any isothermal dynamical process \[\{\bv,
\bu_1,\ldots,\bu_N,\dot\bn\},\] with a support in $V$ is then given by
\begin{multline} 
 \int_{V}\rho(\boldsymbol f\cdot\bv + \boldsymbol g\cdot \dot\bn)
 + \int_{\partial V} \left(\bt\cdot\bv + \bl\cdot\dot\bn
   -\sum^{N}_{k=1}c_k \mu_k(\bu_k-\bv)
   \cdot\boldsymbol \nu \right) \\
   =\frac{d}{dt} \left\{\int_{V}\left(\frac{1}{2}\rho|\bv|^2+ \frac{1}{2}\chi|\dot\bn|^2+
\cW_\text{ion}+\wof\right)+\int_\Omega\wc\right\}+ \int_{V} R.
\label{ionic-energy-balance}
\end{multline}
The
additional boundary term $\sum^{N}_{k=1}c_k
\mu_k(\bu_k-\bv)\cdot\boldsymbol \nu$, which does not appear in
\eqref{energy-balance}, represents the energy transported across the
boundary by the ions.
Equations \eqref{energy-local1}, \eqref{D2}, and \eqref{eq:ioel} allow to express \eqref{ionic-energy-balance} in the local form
  \begin{align}
 \left(\te + \tv+(\nabla\bn)^T\partderiv{\wof}{\nabla\bn}-\bE\otimes\bD\right)\cdot\nabla\bv + \left(L^{\mathrm{e}}-\partderiv{\wof}{\nabla\bn}\right)\cdot\nabla\dot\bn \nonumber \\ +\left(\gel+\gv-\partderiv{\wof}{\bn}+\varepsilon_0\varepsilon_a\left(\bE\otimes\bE\right)\bn\right)\cdot\dot\bn-\sum_{k=1}^Nc_k\nabla\mu_k\cdot\left(\bu_k-\bv\right) \\ =R_{LC}+\sum_{k=0}^{N}\left\{k_B\theta c_k \mathcal D_k^{-1}(\bu_k-\bv)\cdot(\bu_k-\bv)\right\} \nonumber. \label{energy-local-ions}
 \end{align}
 The necessary conditions for positivity of the dissipation functional
 $R$ required by the Clausius-Duhem inequality for an arbitrary
 admissible dynamical process then give the following analogs of the
 constitutive relations
 \eqref{elastic-constitutive-equations}-\eqref{elastic-constitutive-equations-L}
 which account for the presence of ions and the electric field
\begin{eqnarray}
 \te&=&-pI -(\nabla\bn)^T\partderiv{\wof}{\nabla\bn}+\bE\otimes\bD,{\label{elastic-constitutive-equations-ions}} \\
 \gel&=&\partderiv{\wof}{\bn}
 -\varepsilon_0\varepsilon_a (\bE \otimes \bE)\bn +\lambda \bn, 
 {\label{elastic-constitutive-equations-g-ions}}\\
 L^{\mathrm{e}}&=&\partderiv{\wof}{\nabla\bn}, {\label{elastic-constitutive-equations-L-ions}} \\
\bu_k &=& \bv-\frac{1}{k_B\theta}\mathcal{D}_k\nabla \mu_k,
\qquad  k=1,\ldots,N,
\label{eq:chconst}
\end{eqnarray}
along with the relations \eqref{viscous-stressL} for the viscous
stress $\tv$ and \eqref{viscous-molecular-forceL} for the viscous
molecular force $\gv$.  As in the clasical Ericksen-Leslie system, the
fields $p$ and $\lambda$ in
\eqref{elastic-constitutive-equations-ions}-\eqref{elastic-constitutive-equations-g-ions},
are the Lagrange multipliers corresponding to the constraints
\eqref{incompressibility0} and \eqref{constraint0}, respectively.  We
are now ready to formulate the full set of equations governing the
evolution of a nematic electrolyte.
\begin{proposition}
  Suppose that the continuity equations \eqref{ck}, the linear
  momentum balance \eqref{linear momentum0}, the angular momentum
  balance \eqref{angular momentum0}, and the Maxwell's equations of
  electrostatics \eqref{eq:max} hold in $\Omega$. Further, suppose
  that the energy balance \eqref{ionic-energy-balance} holds in every
  subdomain $V\subset\Omega$ for any dynamical process \[\{\bv,
  \bu_1,\ldots,\bu_N,\dot\bn\},\] with a support in $V$. Then the
  necessary conditions for positivity of the dissipation functional
  $R$ in (\ref{D2}) are
  \eqref{viscous-stressL}-\eqref{viscous-molecular-forceL} and
  \eqref{elastic-constitutive-equations-ions}-\eqref{eq:chconst}.
  
  The system of equations for the nematic electrolyte is as follows
  \begin{gather}
   \partderiv{c_k}{t}+\div\left(c_k\left[\bv-\frac{1}{k_B\theta}\mathcal D_k\nabla \mu_k\right]\right)=0, \label{anisotropic-NP1} \\
   -\div(\varepsilon_0\varepsilon(\bn)\nabla \Phi)=
\sum^{N}_{k=1} qz_kc_k, \label{ansiotropic-poissson2}\\
   \rho \dot\bv-\div\left(-pI - (\nabla\bn)^T\partderiv{\wof}{\nabla\bn}+\varepsilon_0\left(\nabla\Phi\otimes\nabla\Phi\right)\varepsilon(\bn)+\tv\right)=\rho \boldsymbol f,
   \label{linear momentum}\\
   \div\bv=0, \label{incompressibility}\\
   \chi \ddot\bn+\partderiv{\wof}{\bn}-\varepsilon_0\varepsilon_a\left(\nabla\Phi\otimes\nabla\Phi\right)\bn+\div\left(\partderiv{\wof}{\nabla\bn}\right)+ \gv+\lambda\bn=\rho\boldsymbol g, \label{angular momentum}\\
   \bn\cdot\bn=1, \label{constraint}
  \end{gather}
  where $\tv$ is given by \eqref{viscous-stressL} and $\gv$ is given by \eqref{viscous-molecular-forceL}. 
\end{proposition}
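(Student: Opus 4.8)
The plan is to repeat the Coleman--Noll argument of Section 2.1.1, now keeping the electrostatic and ionic contributions, and then to substitute the resulting constitutive laws into the postulated balance laws. The starting point is the pointwise reduced dissipation identity \eqref{energy-local-ions}, which the discussion preceding the statement already establishes: one converts the surface integrals in the postulated energy balance \eqref{ionic-energy-balance} into volume integrals via \eqref{Cauchy}, the momentum balances \eqref{linear momentum0}, \eqref{angular momentum0}, the mass balances \eqref{ck} and the localization identity \eqref{eq:ioel}, and then uses \eqref{energy-local1} together with the definition \eqref{D2} of $R$.

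For the first assertion I would apply the Coleman--Noll procedure to \eqref{energy-local-ions}. The tensors $\te$, $L^{\mathrm e}$, $\gel$ are constitutive functions of the state $(\bn,\nabla\bn,\bE)$ only, whereas $R_{LC}$, $\tv$, $\gv$ vanish to second order in the rates and the ionic part of $R$ is quadratic in $\bu_k-\bv$; hence the first variation of the right-hand side of \eqref{energy-local-ions} at vanishing rates is zero, since $R\ge0$ with $R=0$ there. Differentiating \eqref{energy-local-ions} at zero rates along the independent directions $\nabla\dot\bn$, along $\nabla\bv$ with $\tr(\nabla\bv)=0$ (forced by \eqref{incompressibility0}), and along $\dot\bn$ with $\dot\bn\perp\bn$ (forced by \eqref{constraint0}), therefore makes the coefficients of these rates vanish modulo the admissible Lagrange multipliers, which gives exactly \eqref{elastic-constitutive-equations-ions}--\eqref{elastic-constitutive-equations-L-ions}; compared with the purely mechanical case the only new contributions are the Maxwell stress $\bE\otimes\bD$ in $\te$ and the dielectric torque $-\varepsilon_0\varepsilon_a(\bE\otimes\bE)\bn$ in $\gel$, both rate-independent and hence entering precisely these coefficients. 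Leslie's representation-theorem analysis \cite{leslie1992continuum}, unaffected by the electric field, then yields \eqref{viscous-stressL}--\eqref{viscous-molecular-forceL} for $\tv$ and $\gv$.

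Substituting these relations and \eqref{dissipation-functional} back into \eqref{energy-local-ions} collapses it to the ionic identity $\sum_k\bigl(k_B\theta\,c_k\,\mathcal D_k^{-1}(\bu_k-\bv)\cdot(\bu_k-\bv)+c_k\,\nabla\mu_k\cdot(\bu_k-\bv)\bigr)=0$. Because the $\nabla\mu_k$ are independent and each $\mathcal D_k$ is symmetric positive definite, the unique linear symmetric closure consistent with this identity is $\bu_k-\bv=-\frac{1}{k_B\theta}\mathcal D_k\nabla\mu_k$, which is \eqref{eq:chconst}; with this choice the ionic dissipation equals $\sum_k\frac{c_k}{k_B\theta}\,\mathcal D_k\nabla\mu_k\cdot\nabla\mu_k\ge0$, so $R\ge0$. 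This is the delicate step: the energy identity by itself only confines $\bu_k-\bv$ to an ellipsoid of admissible values, so isolating \eqref{eq:chconst} genuinely requires the structural assumption built into \eqref{D2} that the ionic dissipation be a positive-definite quadratic form in the relative velocity -- i.e.\ a linear symmetric mobility law, precisely as in classical linear irreversible thermodynamics. One must also take care that \eqref{incompressibility0} fixes $\te$ only up to $-pI$ and \eqref{constraint0} fixes $\gel$ only up to $\lambda\bn$.

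For the second assertion I would insert \eqref{elastic-constitutive-equations-ions}, \eqref{elastic-constitutive-equations-g-ions}, \eqref{elastic-constitutive-equations-L-ions} into \eqref{linear momentum0} and \eqref{angular momentum0}; use $\bE=-\nabla\Phi$ and $\bD=\varepsilon_0\varepsilon(\bn)\bE$ from \eqref{D-E} and \eqref{eq:max} to rewrite $\bE\otimes\bD=\varepsilon_0(\nabla\Phi\otimes\nabla\Phi)\varepsilon(\bn)$ and $\varepsilon_0\varepsilon_a(\bE\otimes\bE)\bn=\varepsilon_0\varepsilon_a(\nabla\Phi\otimes\nabla\Phi)\bn$; rewrite Gauss's law from \eqref{eq:max} as \eqref{ansiotropic-poissson2}; and substitute \eqref{eq:chconst} into \eqref{ck} to obtain \eqref{anisotropic-NP1}. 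Together with \eqref{incompressibility0} and \eqref{constraint0} this is precisely the system \eqref{anisotropic-NP1}--\eqref{constraint}, with $\tv$, $\gv$ as in \eqref{viscous-stressL}--\eqref{viscous-molecular-forceL}.
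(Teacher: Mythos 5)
Your proposal takes essentially the same route as the paper: the constitutive relations \eqref{elastic-constitutive-equations-ions}--\eqref{eq:chconst} are obtained from the localized energy identity \eqref{energy-local-ions} by the same Coleman--Noll/Leslie argument that the paper gives in the text immediately preceding the proposition, and the paper's own proof of the proposition consists precisely of your final substitution step, using $\bE=-\nabla\Phi$, $\bD=\varepsilon_0\varepsilon(\bn)\bE$ and the identity ${\bf a}\otimes A{\bf b}=({\bf a}\otimes{\bf b})A^T$ together with \eqref{eq:chconst} inserted into \eqref{ck}. Your treatment of the ionic closure---noting that the scalar identity only constrains $\bu_k-\bv$ and that \eqref{eq:chconst} is singled out by the linear, symmetric, positive-definite mobility structure already built into \eqref{D2}---is more explicit than the paper, which simply asserts these relations as necessary conditions, but it is the same argument in substance.
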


\begin{proof}
The equation \eqref{anisotropic-NP1} immediately follows from \eqref{eq:chconst} and \eqref{ck}. The remaining equations follow by substituting the expressions for the appropriate stresses, using the definitions of $\bD$ and $\bE$, and by recalling that ${\bf a}\otimes A{\bf b}=({\bf a\otimes b})A^T$ for any ${\bf a,b} \in \mathbb R^3$ and $A\in M^{3\times3}$.
\end{proof}

\begin{remark}
From the equation (\ref{anisotropic-NP1}), we identify the tensor
\begin{equation}
\mathcal M_k=  \frac{qz_k}{k_B\theta} \mathcal D_k,
\end{equation}
as the {\it mobility tensor} of the $k$th species.  Note that, when $\mathcal D_k$ is a multiple of the identity,  $\mathcal M$ is the analog of  Einstein's mobility relation of electrons in a gas ($z_k=-1$). The conductivity matrix (mobility times charge density) of the $k$-th species is now given by
\begin{equation}
\sigma^k=\frac{1}{k_B \theta} c_kz_k^2 q^2 \mathcal D_k. \label{conductivity}
\end{equation}
Note that the corresponding resistivity matrix is equal to the inverse of $\sigma^k.$
\end{remark}

\subsubsection{Boundary Conditions}

Since the energy law \eqref{ionic-energy-balance} for the nematic electrolyte has to hold in the entire domain $\Omega$,
it follows from the equations \eqref{anisotropic-NP1}-\eqref{constraint} that
\begin{eqnarray}
  \int_{\partial\Omega}\left\{T\nu\cdot\bv+L\nu\cdot\dot\bn-\sum^{N}_{k=1}c_k \mu_k(\bu_k-\bv)
    \cdot\boldsymbol \nu\right\} = \int_{\partial\Omega}\left\{\boldsymbol t\cdot\bv+\bl\cdot\dot\bn-\sum^{N}_{k=1}j_k\right\},
\end{eqnarray}
where $j_k,\ k=1,\ldots,N$ represents the normal energy flux
associated with the transport of the $k$-th species of ions across the
boundary.  This equation should be valid for all dynamical processes
$\{\bv, \bu_1,\ldots,\bu_k,\dot\bn\}$ therefore the boundary
conditions on $\partial \Omega$ should be of the form
\begin{eqnarray}
  T\bnu=\hat{\boldsymbol t} 
  &\quad {\textrm{or}}\quad&
  \bv =\boldsymbol 0, \qquad {\textrm{and}} \\
  L\bnu=\boldsymbol{\hat {\boldsymbol l}}
  & {\textrm{or}}& 
  \bn=\hat\bn, \qquad {\textrm{and}} \\
  c_k \mu_k(\bu_k-\bv)\cdot\boldsymbol\nu=\hat j_k
  &{\textrm{or}} & \bu_k\cdot\boldsymbol\nu=\bv\cdot\boldsymbol\nu,
  \qquad k=1,\ldots,N,
\end{eqnarray} 
where $\hat{\boldsymbol t}$, $\hat{\boldsymbol l}$, $\hat{\boldsymbol{\bn}}$, and $\hat j_k,\ k=1,\ldots,N$ are prescribed fields on $\partial\Omega$ with 
$|\hat{\bn}|=1$ and $T$ and $L$ given by \eqref{total-stress}, \eqref{elastic-constitutive-equations-ions}, \eqref{viscous-stressL} and \eqref{elastic-constitutive-equations-L-ions}. Observe that the fields $\hat{\boldsymbol t}$, $\hat{\boldsymbol l}$, $\hat j_k,\ k=1,\ldots,N$ can be time-dependent in this formulation.

The set of the boundary conditions should be supplemented by the
boundary data on the electric field. Here, we will impose the
Dirichlet conditions on the poential on the boundary between the
nematic electrolyte and a conductor
\begin{equation}
\label{eq:con-el}
\Phi|_{\partial\Omega}=\Phi_0,
\end{equation}
for some prescribed $\Phi_0$. On the boundary between the nematic and
an insulating medium \cite{ll84}, we will impose the condition of the
zero jump of the normal component of the displacement $\bD$, that is
\begin{equation}
\left[\bD\cdot\nu\right]_{\partial\Omega}=0,
\end{equation}
where $\left[\cdot\right]_{\partial\Omega}$ represent the jump of a quantity in the brackets across $\partial\Omega$. In this case, the equations of electrostatics have to be
solved in $\mathbf{R}^3$.

\subsubsection{Variational Structure}
Setting 
\begin{eqnarray*}
  \calW(\bn,\nabla \bn, \Phi, \nabla \Phi, c_1,\ldots, c_N)
  &=& W_{OF}(\bn, \nabla \bn) 
  + \calW_{ion}(c_1, \ldots, c_N) 
  + \calW_{elec}(\phi,\nabla \phi, c_1, \ldots, c_N) \\
  &=& W_{OF}(\bn, \nabla \bn) 
  + \sum_{k=1}^N \left( k_B \theta c_k \ln(c_k) + q z_k c_k \Phi \vph\right)
  - \frac{\varepsilon_0}{2} \varepsilon(\bn) \nabla\Phi\cdot\nabla\Phi,
\end{eqnarray*}
Maxwell's equation \eqref{anisotropic-NP1} and the balance laws
\eqref{ansiotropic-poissson2}, \eqref{linear momentum} and
\eqref{angular momentum} may be written as
\begin{gather*}
  \partderiv{c_k}{t}
  + \div\left(c_k (\bv - 
    (1/k_B\theta) \nabla \mu_k) \vph \right)=0, \\
  \div\left( \partderiv{\calW}{\nabla \Phi} \right) 
  = \partderiv{\calW}{\Phi}, \\
  \rho \dot\bv
  -\div\left(-pI + \partderiv{R}{\nabla\bv}
    - (\nabla\bn)^T \partderiv{\calW}{\nabla\bn}
    - \nabla\Phi \otimes \partderiv{\calW}{\nabla\Phi}
  \right)=\rho \boldsymbol f, \\
  \chi \ddot\bn
  + \partderiv{R}{\dot{\bn}}
  - \div \left(\partderiv{\calW}{\nabla\bn}\right)
  + \partderiv{\calW}{\bn}
  +\lambda\bn=\rho\boldsymbol g.
\end{gather*}
Introducing the Legendre transform of $\calW$,
$$
\calW^*(\bn,\nabla \bn, \Phi, \nabla \Phi, \mu_1,\ldots, \mu_N)
= -W_{OF}(\bn, \nabla \bn) 
+ k_B \theta \sum_{k=1}^N \exp\left((\mu_k - qz_k \Phi)/k_B\theta -1\vph\right)
+ \frac{\varepsilon_0}{2} \varepsilon(\bn) \nabla\Phi\cdot\nabla\Phi,
$$
the dual relations take the form
$$
c_k = \partderiv{\calW^*}{\mu_k} 
= \exp\left((\mu_k - qz_k \Phi)/k_B\theta -1\vph\right)
\qquad \text{ and } \qquad
\mu_k = \partderiv{\calW}{c_k} = k_B \theta (\ln(c_k)+1) + q z_k \Phi.
$$
Computing $\nabla \calW^*$ and rearranging the terms shows
\begin{eqnarray*}
  \lefteqn{ \div \left( -\calW^* I 
      - (\nabla\bn)^T \partderiv{\calW}{\nabla\bn}
      - \nabla\Phi \otimes \partderiv{\calW}{\nabla\Phi} \right)
    = \div \left( -\calW^* I 
      + (\nabla\bn)^T \partderiv{\calW^*}{\nabla\bn}
      + \nabla\Phi \otimes \partderiv{\calW^*}{\nabla\Phi} \right) } \\
  &=& (\nabla \bn)^T \left( 
    \div \left(\partderiv{\calW^*}{\nabla\bn}\right)
    - \partderiv{\calW^*}{\bn} \right)
  + \left(\div \left(\partderiv{\calW^*}{\nabla \Phi}\right)
    - \partderiv{\calW^*}{\Phi} \right) \nabla \Phi
  - \sum_{k=1}^N \partderiv{\calW^*}{\mu_k} \nabla \mu_k \\
  &=& -(\nabla \bn)^T \left( \div \left(\partderiv{\calW}{\nabla\bn}\right)
    - \partderiv{\calW}{\bn} \right)
  - \left(\div \left(\partderiv{\calW}{\nabla \Phi}\right)
    - \partderiv{\calW}{\Phi} \right) \nabla \Phi
  - \sum_{k=1}^N c_k \nabla \mu_k,
\end{eqnarray*}
so the linear momentum equation can be written as
\begin{equation} \label{altLinearMomentum}
\rho \dot\bv
-\div\left(-(p+\calW^*) I + \partderiv{R}{\nabla\bv} \right)
+ (\nabla \bn)^T \left( \div \left(\partderiv{\calW}{\nabla\bn}\right)
  - \partderiv{\calW}{\bn} \right)
+ \sum_{k=1}^N c_k \nabla \mu_k 
=\rho \boldsymbol f.
\end{equation}
The energy estimate for the coupled system now follows upon
multiplying the equations for the concentrations by $\mu_k$, Maxwells
equation by $\Phi_t$, and the linear and angular momentum equations
by $\bv$ and $\dot{\bn}$ respectivly. Granted appropriate boundary
data, this gives
$$
\frac{d}{dt} \int_\Omega \left\{
(1/2) \left(\rho |\bv|^2 + \chi |\dot{\bn}|^2 \right) + \calW \vph\right\}
+ \int_\Omega \left\{
\partderiv{R}{\nabla \dot{\bn}} \cdot \dot{\bn}
+ \partderiv{R}{\nabla \bv} \cdot \nabla \bv 
+ \sum_{k=1}^N \frac{c_k}{k_B \theta} |\nabla \mu_k|^2
\vph\right\}
= \int_\Omega \rho \left(\boldsymbol f \cdot\bv 
+ \boldsymbol g\cdot \dot{\bn} \vph\right).
$$
The identities required obtain the statement of the linear momentum
equation in \eqref{altLinearMomentum} and to pose the balances of
mass for the concentrations in terms of the chemcial potentials
are used in an essential fashion for the development of stable
numerical schemes with non--negative concentrations \cite{Wa11}.

\section {LC-Enabled Electroosmosis}
Next, we use the model developed in the previous section to study electroosmosis in a nematic liquid crystalline film constrained between two parallel plates.
In \cite{Lavrent2014}, the authors present an approach to generate electrokinetic effects, by using as an electrolyte a liquid crystal with surface-imposed
distortions of molecular orientations. In the presence of the uniform electric field, these variations produce space charge separation 
that triggers electroosmotic flow in the liquid crystal. In particular, for the setup depicted in Fig. \ref{fig:exp}, the director orientation is periodically varying 
in the vertical direction and an AC field is applied in the horizontal direction. It was observed that spatially periodic horizontal flow proceeds along 
the "guiding rails" induced by molecular orientation with the direction of the flow independent of the sign of the field. Indeed, for weak fields, it is known \cite{LavrentovichLazoP2010,LazoLavrent2013,Lazo-Lavrent2014,Lavrent2014} that the driving force of the flow is proportional to the square of the field. The physical reason is simple \cite{Lazo-Lavrent2014,Lavrent2014}. The spatial charge is created at the director distortions by the applied field $E$ and therefore grows linearly with $E$; the Coulomb force, being the product of the charge and the driving field, should thus grow as $E^2$. We are interested in establishing
a simple model of this process.
\begin{figure}[h]
\begin{center}
\includegraphics[height=2in]{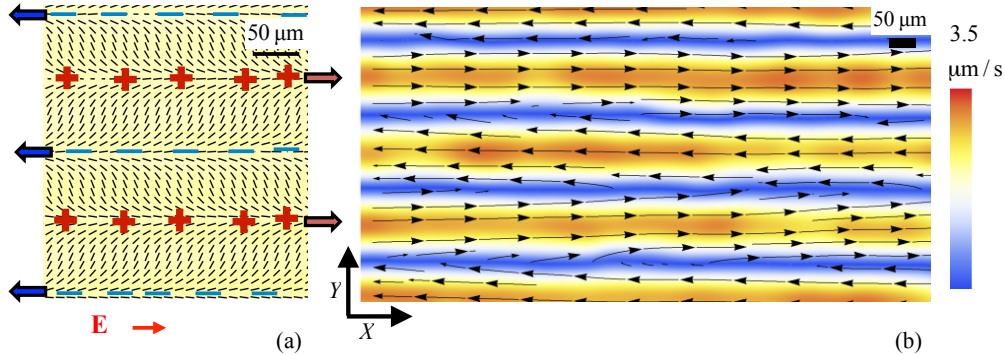}
\caption{Liquid crystal-enabled electroosmotic flows in a flat nematic cell with patterned one-dimensionally periodic director field  \cite{Lavrent2014}. (a) Experimentally imposed director pattern (short black dashes) and a schematics of space charge separation due to director distortions when the electric field is directed from left to right.  Electric conductivity of the nematic is higher along the director than in a direction perpendicular to it.  Clouds of separated positive and negative ions are marked by the "+" and œ"-" symbols, respectively.  Thick arrows show the direction of the electrostatic forces acting on the charge clouds and the direction of local electroosmotic flows. Note that reversal of field polarity reverses the polarity of charges but preserves the directions of the driving electrostatic force and the induced nematic flows.  (b) Experimentally determined map of electroosmotic velocities corresponding to the director pattern in (a).}
\label{fig:exp}
\end{center}
\end{figure}
\subsection{A Simplified Model} In what follows, we will represent the film by a domain $\Omega\times[-h,h]\subset\mathbb{R}^3$, where $\Omega\subset\mathbb{R}^2$ and $0<h\ll1$. The plates are patterned in a way that enforces strong anchoring of the director so that the director orientation varies in the prescribed way along the plates while it remains parallel to the plates. This mode of "pre-patterned" surface anchoring is achieved experimentally through the recently developed plasmonic mask approach \cite{ADMA:ADMA201670081}. Briefly, the photosensitive layers at the inner surfaces of the bounding plates are illuminated by light that passes through the array of narrow elongated nanoslits.  The transmitted light acquires local state of polarization determined by the orientation of nanoslits. The pattern of light polarization is imprinted onto the photosensitive layer; the latter then imposes the alignment pattern onto the director of the adjacent nematic liquid crystal. Since both plates in the assembled cell are irradiated simultaneously, the patterns at the top and bottom plates are identical to each other and impose fixed boundary conditions on both nematic/plate boundaries $\Omega\times\{-h,h\}$. Choosing a coordinate system with the $z$-axis perpendicular to the plates, we impose the Dirichlet condition $\bn|_{\partial\Omega\times\{z\}}=\bn|_{\partial\Omega\times\{h\}}$ for any $z\in[-h,h]$ on the lateral boundary of the film $\partial\Omega\times[-h,h]$.

The primary goal of this section is to demonstrate that the predictions of our model are in qualitative agreement with the experiment. Even though the ideas below apply to a wide range of parameter regimes, here we will only consider a simple setup that is sufficiently anisotropic to replicate experimentally observed behavior. To this end, if we let $\alpha_1=\alpha_2=\alpha_3=\alpha_5=\alpha_6=0$ and $\alpha_4>0$, then by \eqref{eq:gammas} the parameters $\gamma_1=\gamma_2=0$ and, hence by \eqref{viscous-molecular-forceL} the viscous molecular force $\gv\equiv{\bf 0}$. The approximation of isotropic viscosity is justified, as the mechanism of liquid crystal-enabled electrokinetics is not the anisotropy of viscosity but the anisotropy of conductivity (or dielectric permittivity) \cite{Lavrent2014,Lazo-Lavrent2014}.  Viscosity anisotropy renormalizes the velocities of electrokinetic flows, but does not create these flows \cite{Lavrent2014,Lazo-Lavrent2014}. By assuming that the dielectric permittivity anisotropy of the liquid crystal is small and setting $\varepsilon_a=0$, we eliminate direct interaction between the director and the electric field. Although the dielectric anisotropy is expected to play a role similar to anisotropy of conductivity in triggering the liquid crystal-enabled electrokinetics, such a simplification allows us to reflect closely the experimental situations described in \cite{Lavrent2014,Lazo-Lavrent2014}, in which the liquid crystal was formulated to be of zero dielectric anisotropy.

To simplify the model further, we adopt the equal elastic constants approximation $K_1=K_2=K_3=K$, so that 
\[\wof\left(\bn,\nabla\bn\right)=\frac{K}{2}{|\nabla\bn|}^2.\] 
Here we have also eliminated the term in \eqref{Oseen Frank} that corresponds to the elastic constant $K_4$ since this term is a null Lagrangian under the Dirichlet boundary data on the director. Note that the simplifying assumptions made in this paragraph generally do not hold for the experimental system considered in \cite{Lavrent2014}. A more detailed study of this system using the methods described in this section will appear elsewhere.

Given the assumptions above, \eqref{angular momentum} reduces to the harmonic map equation
\begin{equation}
\label{eq:hm1}
\Delta\bn=\gamma\bn
\end{equation}
in $\Omega\times[-h,h]$, where $\bn$ satisfies the Dirichlet conditions on the boundary of the film. Since the director is assumed to be parallel to the plates on the film boundary, we seek a solution of \eqref{eq:hm1} of the form \[\bn(\psi(x,y))=(\cos{\psi(x,y)},\sin{\psi(x,y)},0).\] By substituting this ansatz into \eqref{eq:hm1} we find that 
\begin{equation}
\label{eq:hm2}
\Delta\psi=0
\end{equation}
in $\Omega$. In the remainder of this section we will require the director pattern on the plates to satisfy the equation \eqref{eq:hm2} so that $\bn(x,y)=(\cos{\psi(x,y)},\sin{\psi(x,y)},0)$ is a solution of \eqref{eq:hm1}. In fact, the third component of $\bn$ can be neglected and $\bn$ can be written as \[\bn(\psi(x,y))=(\cos{\psi(x,y)},\sin{\psi(x,y)}).\]

It is now reasonable to look for a solution of the system of governing equations \eqref{anisotropic-NP1}-\eqref{angular momentum} that is independent of the $z$-variable and such that the third component of velocity is identically zero, i.e., $\bv(x,y,t)=(u(x,y,t),v(x,y,t),0)$ or \[\bv(x,y,t)=(u(x,y,t),v(x,y,t)),\] if we drop the trivial component. As an additional simplifying assumption, we consider a case of two ionic species given by the fields \[c_p(x,y,t)\mbox{ and }c_m(x,y,t)\] with $z_p=1$ and $z_m=-1$, respectively. We further select the anisotropic diffusion matrix for both species to be in the form
\begin{equation}
\label{eq:dmtrx}
\mathcal{D}(\psi)=\bar{\mathcal{D}}\left(I+(\lambda-1)\bn(\psi)\otimes\bn(\psi)\right),
\end{equation}
where $\bar{\mathcal{D}}>0$ and the parameter $\lambda\geq0$ determines the strength of anisotropy.

The system \eqref{anisotropic-NP1}-\eqref{angular momentum} now takes the form
\begin{equation}
\label{eq:eqs}
\left\{
\begin{array}{l}
\partderiv{c_p}{t}+ \div(\bv c_p)=\div\left(\mathcal D(\psi)\left(\nabla c_p+\frac{qc_p}{k_B\theta}\nabla\Phi\right)\right),    \\
\partderiv{c_m}{t}+ \div(\bv c_m)=\div\left(\mathcal D(\psi)\left(\nabla c_m-\frac{qc_m}{k_B\theta}\nabla\Phi\right)\right),    \\
-\Delta\Phi=\frac{q}{\varepsilon\varepsilon_0}\left(c_p-c_m\right),   \\
  \rho\left(\frac{\partial{\bv}}{\partial{t}}+\left(\bv\cdot\nabla\right)\bv\right)= -\nabla{p}+\mu\Delta\bv-q\left(c_p-c_m\right)\nabla\Phi, \\
  \div\bv=0, \\
 \Delta\psi=0,
\end{array}
\right.
\end{equation}
in $\Omega$, where $\mu:=\alpha_4/2$. Here we use the symbol $p$ to denote the pressure from \eqref{linear momentum} incremented by the factor $k_B\theta\left(c_p+c_m\right)+\frac{K}{2}{|\nabla\psi|}^2$. Specializing further to a rectangular domain $\Omega=[-L,L]\times[-W,W]$, we impose the boundary conditions
\begin{equation}
\label{eq:bcs}
\left\{
\begin{array}{ll}
\bv={\mathbf 0}  & \mathrm{on}\ \{-L,L\}\times[-W,W],    \\
v=0\mbox{ and }\frac{\partial u}{\partial y}=0  & \mathrm{on}\ [-L,L]\times\{-W,W\},    \\
\left(\nabla c_p+\frac{qc_p}{k_B\theta}\nabla\Phi\right)\cdot\mathcal D(\psi)\boldsymbol\nu=0 & \mathrm{on}\ \partial\Omega,     \\
\left(\nabla c_m-\frac{qc_m}{k_B\theta}\nabla\Phi\right)\cdot\mathcal D(\psi)\boldsymbol\nu=0  & \mathrm{on}\ \partial\Omega,   \\
\frac{\partial\Phi}{\partial y}(x,\pm W,t)=0, &  \\
\Phi(\pm L,y,t)=\pm\Phi_0(t).
\end{array}
\right.
\end{equation}
Note that, in writing the impenetrability conditions on $c_p$ and $c_m$ in \eqref{eq:bcs}, we took advantage of the symmetry of the diffusion matrix.
Furthermore, the second equation in \eqref{eq:bcs} corresponds to a perfect slip condition on the lateral components of the boundary $\partial\Omega$. Here the conditions on $[-L,L]\times\{-W,W\}$ are equivalent to imposing periodic boundary conditions on the solution of \eqref{eq:eqs} corresponding to periodic director distributions discussed below.
\begin{remark}
Since $\bn(x,y)=(\cos{\psi(x,y)},\sin{\psi(x,y)})$, then
\begin{equation}
\label{eq:stress_n}
\nabla\bn^T\nabla\bn=\nabla\psi\otimes\nabla\psi=\frac{{|\nabla\psi|}^2}{2}I+\boldsymbol\tau,
\end{equation}
where, by \eqref{eq:hm2}, the deviatoric stress tensor $\boldsymbol\tau$ is divergence-free:
$\div{\boldsymbol\tau}=0.$
\end{remark}

\subsection{Nondimensionalization}
Next, we nondimensionalize the system \eqref{eq:eqs}-\eqref{eq:bcs} as follows. Let
\begin{equation}
\label{eq:nondims}
\tilde \bx=\frac{\bx}{\bar{W}},\quad \tilde\bv=\frac{\bv}{\bar u},\quad \tilde t=\frac{\bar ut}{\bar{W}},\quad \tilde c_p=\frac{c_p}{\bar c},\ \tilde c_m=\frac{c_m}{\bar c},\quad \tilde\Phi=\frac{\Phi}{\bar\Phi},\quad \tilde p=\frac{p}{\bar p},
\end{equation}
where $\bar f$ denotes the characteristic value of a given quantity $f$. We let $\bar\Phi=\bar{W}L^{-1}\|\Phi_0\|_\infty=\bar{W}\|E_0\|_\infty$, where $E_0$ represents the strength of the electric field between the electrodes. Following
\cite{Lavrent2014}, assume that
\begin{equation}
\label{eq:scu}
\bar{u}=\frac{\varepsilon\varepsilon_0\bar{\Phi}^2}{\mu \bar{W}}.
\end{equation}
By denoting $\tilde{\mathcal{D}}(\psi)={\bar{\mathcal{D}}}^{-1}\left(\mathcal{D}_{ij}(\psi)\right)$, dropping all tildes for notational convenience, and setting $f_{,r}:=\partial{f}/\partial{r}$ for any $f$ and $r$, we obtain the system of nondimensional equations 
\begin{equation}
\label{eq:eqsnd}
\left\{
\begin{aligned}
&  \Pe\left(c_{p,t}+uc_{p,x}+vc_{p,y}\right)=\left(\mathcal D_{11}(\psi)\left(c_{p,x}+\FN c_p\Phi_{,x}\right)\right)_{,x} \\ & \quad\quad\quad\quad+\left(\mathcal D_{12}(\psi)\left(c_{p,x}+\FN c_p\Phi_{,x}\right)\right)_{,y}+\left(\mathcal D_{12}(\psi)\left(c_{p,y}+\FN c_p\Phi_{,y}\right)\right)_{,x}    \\
& \quad\quad\quad\quad\quad\quad\quad\quad+\left(\mathcal D_{22}(\psi)\left(c_{p,y}+\FN c_p\Phi_{,y}\right)\right)_{,y}, \\
& \Pe\left(c_{m,t}+uc_{m,x}+vc_{m,y}\right)=\left(\mathcal D_{11}(\psi)\left(c_{m,x}-\FN c_m\Phi_{,x}\right)\right)_{,x} \\ & \quad\quad\quad\quad+\left(\mathcal D_{12}(\psi)\left(c_{m,x}-\FN c_m\Phi_{,x}\right)\right)_{,y}+\left(\mathcal D_{12}(\psi)\left(c_{m,y}-\FN c_m\Phi_{,y}\right)\right)_{,x}    \\
& \quad\quad\quad\quad\quad\quad\quad\quad+\left(\mathcal D_{22}(\psi)\left(c_{m,y}-\FN c_m\Phi_{,y}\right)\right)_{,y}, \\
 -&\Phi_{,xx}-\Phi_{,yy}=\BN \left(c_p-c_m\right),   \\
& \psi_{,xx}+\psi_{,yy}=0, \\
&   \Rey\left(u_{,t}+uu_{,x}+vu_{,y}\right)= -p_{,x}+u_{,xx}+u_{,yy}-\BN\left(c_p-c_m\right)\Phi_{,x}, \\
&  \Rey\left(v_{,t}+uv_{,x}+vv_{,y}\right)= -p_{,y}+v_{,xx}+v_{,yy}-\BN\left(c_p-c_m\right)\Phi_{,y}, \\
&  u_x+v_y=0,
\end{aligned}
\right.
\end{equation}
in $\Omega=\left[-\varepsilon^{-1},\varepsilon^{-1}\right]\times[-a,a]$. Here we set
\begin{equation}
\label{eq:nondimgrs}
\Rey=\frac{\rho\bar u\bar{W}}{\mu},\quad \Pe=\frac{\bar u \bar{W}}{\bar{\mathcal{D}}},\quad \varepsilon=\frac{\bar{W}}{L},\quad a=\frac{W}{\bar{W}},\quad \FN=\frac{q\bar{\Phi}}{k_B\theta},\quad \BN=\frac{q\bar c \bar{W}^2}{\varepsilon\varepsilon_0\bar\Phi}
\end{equation}
and assume that $\bar p=\mu\bar{u}W^{-1}$. The boundary conditions can now be written as
\begin{equation}
\label{eq:bcsnd}
\left\{
\begin{array}{ll}
u=v=0 & \mathrm{on}\ \left\{-\varepsilon^{-1},\varepsilon^{-1}\right\}\times[-a,a],    \\
v=0\mbox{ and }u_{,y}=0  &  \mathrm{on}\ \left[-\varepsilon^{-1},\varepsilon^{-1}\right]\times\{-a,a\},    \\
\mathcal D_{11}(\psi)\left(c_{p,x}+\FN c_p\Phi_{,x}\right)+\mathcal D_{12}(\psi)\left(c_{p,y}+\FN c_p\Phi_{,y}\right)=0 & \mathrm{on}\ \left\{-\varepsilon^{-1},\varepsilon^{-1}\right\}\times[-a,a],     \\ \smallskip
\mathcal D_{11}(\psi)\left(c_{p,x}+\FN c_p\Phi_{,x}\right)+\mathcal D_{12}(\psi)\left(c_{p,y}+\FN c_p\Phi_{,y}\right)=0  & \mathrm{on}\ \left\{-\varepsilon^{-1},\varepsilon^{-1}\right\}\times[-a,a],   \\ \smallskip
\mathcal D_{12}(\psi)\left(c_{px}+\FN c_p\Phi_x\right)+\mathcal D_{22}(\psi)\left(c_{py}+\FN c_p\Phi_y\right)=0  & \mathrm{on}\ \left[-\varepsilon^{-1},\varepsilon^{-1}\right]\times\left\{-a,a\right\},   \\ \smallskip
{\mathcal{D}}_{12}(\psi)\left(c_{mx}-\FN c_m\Phi_x\right)+{\mathcal{D}}_{22}(\psi)\left(c_{my}-\FN c_m\Phi_y\right)=0  & \mathrm{on}\ \left[-\varepsilon^{-1},\varepsilon^{-1}\right]\times\left\{-a,a\right\},   \\ \smallskip
\Phi_y(x,\pm a,t)=0, &  \\
\Phi\left(\pm\varepsilon^{-1},y,t\right)=\pm\Phi_0(t), 
\end{array}
\right.
\end{equation}
where we set $\tilde\Phi_0(t)=\frac{\Phi_0(t)}{\bar\Phi}$ and drop the tilde. From now on---unless specified otherwise---we will work with the nondimensional problem \eqref{eq:eqsnd}-\eqref{eq:bcsnd}.

\subsection{Periodic Flow Pattern}
Suppose now that the director field follows a periodic stripe pattern with the stripes being parallel to the $x$-axis. In nondimensional coordinates this can be modeled, for example, by assuming that $a=\pi n$ for some $n\in\mathbb N$ and setting $\psi=\frac{y}{2}$. One can immediately observe that this function satisfies the fourth equation in \eqref{eq:eqsnd}. By a direct computation we also have
\begin{equation}
\label{eq:diifc}
\mathcal{D}=\frac{1}{2}\left((\lambda+1)I+(\lambda-1)
\left(
\begin{array}{cc}
 \cos{y} &   \sin{y}    \\
  \sin{y} &   -\cos{y}    
\end{array}
\right)
\right).
\end{equation}

Our setup can be associated with an electrochemical experiment, in which the right---$\left\{\varepsilon^{-1}\right\}\times[-\pi n,\pi n]$---and the left---$\left\{-\varepsilon^{-1}\right\}\times[-\pi n,\pi n]$---components of the boundary are identified as a positive and a negative electrode, respectively. Here the rest of the boundary is assumed to be electrically insulated. As the positive ions will be attracted to the negative electrode and vice versa, the boundary layers would form next to the electrodes that would subsequently suppress both the potential difference and flow in the nematic electrolyte as long as the electrodes potentials remain fixed. In an experiment, this is circumvented by applying the AC instead of the DC field making the corresponding problem inherently transient. Here we will assume that the parameter $\varepsilon$ is small and that the flux of ions on the timescale of the flow is not large enough to significantly affect the boundary layers. We will thus solve the "outer problem" away from the electrodes and set $\Phi_0$ to be constant in time and equal to the half of the potential difference between the "matching regions" corresponding to the edges of the right and left boundary layers. Since we are not solving the equations inside the boundary layer, we can then also ignore the second and the third boundary conditions in \eqref{eq:bcsnd}.

We now seek a solution of \eqref{eq:eqsnd}-\eqref{eq:bcsnd} in the form
\[c_p=c_p(y),\quad c_m=c_m(y),\ \bv=(u(y),0),\ \Phi=x+\phi(y).\]
Then the incompressibility condition trivially holds and the system \eqref{eq:eqsnd}-\eqref{eq:bcsnd} reduces to 
\begin{equation}
\label{eq:eqs1d}
\left\{
\begin{aligned}
&  \left(\FN\,\mathcal D_{12}(\psi)c_p+\mathcal D_{22}(\psi)\left(c_{p,y}+\FN c_p\phi_{,y}\right)\right)_{,y}=0, \\ 
 &\left(-\FN\,\mathcal D_{12}(\psi)c_m+\mathcal D_{22}(\psi)\left(c_{m,y}-\FN c_m\phi_{,y}\right)\right)_{,y}=0, \\
&\phi_{,yy}=-\BN \left(c_p-c_m\right),   \\
&   u_{,yy}=\BN\left(c_p-c_m\right), \\
&p_{,y}=-\BN\left(c_p-c_m\right)\phi_y, \\
&  p_{,x}=0, \\
& \int_{-\pi n}^{\pi n} c_p\,dy=\int_{-\pi n}^{\pi n} c_m\,dy=2\pi n, \\
& \int_{-\pi n}^{\pi n} u\,dy=0, 
\end{aligned}
\right.
\end{equation}
subject to the boundary conditions
\begin{equation}
\label{eq:bcs1d}
\left\{
\begin{array}{l}
u_{,y}(\pm \pi n)=0,    \\ 
\FN\,\mathcal D_{12}(\psi(\pm \pi n))c_p(\pm \pi n)+\mathcal D_{22}(\psi(\pm \pi n))c_{p,y}(\pm \pi n)=0,     \\ 
\FN\,\mathcal D_{12}(\psi(\pm \pi n))c_m(\pm \pi n)-\mathcal D_{22}(\psi(\pm \pi n))c_{m,y}(\pm \pi n)=0,    \\ 
\phi_{,y}(\pm \pi n)=0,
\end{array}
\right.
\end{equation}
where $f_{,y}=df/dy$ for any function $f$ of $y$. Note that we have added three integral conditions to the problem in order to take into account conservation of mass of both species as well as to ensure that there is a zero net flow across each cross-section of the domain. Indeed, since the time-dependent problem has now been replaced by the stationary problem, we have to impose the condition that the total masses of both ionic species are the same as what they were at the initial time for the time-dependent problem. 

The fourth and fifth equations in \eqref{eq:eqs1d} are used to determine pressure. By combining the third and the fourth equations in \eqref{eq:eqs1d}, we find that 
\begin{equation}
\label{eq:n1}
u_{,yy}=-\phi_{,yy},
\end{equation}
so that the nematic liquid crystal velocity is found once the electric field has been computed. To this end, observe that by integrating the first two equations in \eqref{eq:eqs1d} and using \eqref{eq:bcs1d}, we have
\[\FN\,\mathcal D_{12}(\psi)c_p+\mathcal D_{22}(\psi)\left(c_{p,y}+\FN c_p\phi_{,y}\right)=0,\quad\quad\quad-\FN\,\mathcal D_{12}(\psi)c_m+\mathcal D_{22}(\psi)\left(c_{m,y}-\FN c_m\phi_{,y}\right)=0\]
on $[-\pi n,\pi n]$. Dividing these equations by $\mathcal{D}_{22}(\psi)c_p$ and $\mathcal{D}_{22}(\psi)c_m$, respectively, and using \eqref{eq:diifc} gives
\begin{equation}
\label{eq:r31}
\left(\log{c_p}+\FN\phi\right)_{,y}=-\frac{\FN\,\mathcal D_{12}(\psi)}{\mathcal D_{22}(\psi)}=-\FN\beta_{,y},\quad\left(\log{c_m}-\FN\phi\right)_{,y}=\frac{\FN\,\mathcal D_{12}(\psi)}{\mathcal D_{22}(\psi)}=\FN\beta_{,y}
\end{equation}
on $[-\pi n,\pi n]$ where
\begin{equation}
\label{eq:r34}
\beta(y)=\log{\left(\lambda+1-(\lambda-1)\cos{y}\right)}.
\end{equation}
The system of governing equations can now be reduced to a single equation for $r=\log{c_p}$. By taking the derivative of the first equation in \eqref{eq:r31}, we obtain 
\begin{equation}
\label{eq:r35}
r_{,yy}+\FN\phi_{,yy}=-{\FN}\beta_{,yy}(y).
\end{equation}
Since from \eqref{eq:r33} it follows that $c_m=c_0^2/c_p=c_0^2e^{-r}$, using the third equation in \eqref{eq:eqs1d} gives 
\begin{equation}
\label{eq:r36}
r_{,yy}-\FN\,\BN \left(e^r-c_0^2e^{-r}\right)=-{\FN}\beta_{,yy}(y).
\end{equation}
This equation should be supplemented by the integral conditions from \eqref{eq:eqs1d} that in terms of $r$ take the form
\begin{equation}
\label{eq:r37}
\int_{-\pi n}^{\pi n}e^r\,dy=c_0^2\int_{-\pi n}^{\pi n}e^{-r}\,dy=2\pi n.
\end{equation}
Even though they are significantly simpler than the original system of partial differential equations, both the problem \eqref{eq:phi} and the problem \eqref{eq:r36}-\eqref{eq:r37} still need to be solved either numerically or by using asymptotic expansions, provided that a suitable small parameter can be identified.

\begin{remark}
Alternatively, the solution procedure for the system \eqref{eq:eqs1d}-\eqref{eq:bcs1d} can essentially be reduced to solving a single second order nonlinear ODE for the potential.  Indeed, integrating \eqref{eq:r31}, we find
\[c_p=c_p^0e^{-F(\phi+\beta)}\mbox{   and   }c_m=c_m^0e^{F(\phi+\beta)},\]
where $c_p^0$ and $c_m^0$ are arbitrary positive constants. Since $\phi$ is determined up to an arbitrary constant, we can replace $\phi$ by $\phi-\phi_0$ and choose $\phi_0$ so that
\[c_p^0e^{\FN\phi_0}=c_m^0e^{-\FN\phi_0}=c_0.\]
It follows that 
\begin{equation}
\label{eq:r33}
c_p=c_0e^{-F(\phi+\beta)}\mbox{   and   }c_m=c_0e^{F(\phi+\beta)}.
\end{equation}
Substituting these expressions into the third equation in \eqref{eq:eqs1d}, gives the problem
\begin{equation}
\label{eq:phi}
\left\{
\begin{array}{ll}
\phi_{,yy}=c_0\BN\left(e^{\FN(\phi+\beta)}-e^{-\FN(\phi+\beta)}\right), & y\in(-\pi n,\pi n), \\
\phi_{,y}(\pm \pi n)=0,
\end{array}
\right.
\end{equation}
satisfied by $\phi$.

\end{remark} 

\subsubsection{Asymptotic Solutions}
The behavior of solutions of \eqref{eq:r36}-\eqref{eq:r37} is determined by the sizes of nondimensional groups $\BN$ and $\FN$. In the experimental setup considered in 
\cite{LazoLavrent2013}, the physical parameters had the following values
\begin{equation}
\label{eq:dun}
\begin{array}{llll}
 L=1\cdot10^{-2}\,\mathrm{m}, & W=5\cdot10^{-4}\,\mathrm{m},  & H=5\cdot10^{-5}\,\mathrm{m}, & \lambda=1.417, \\ \smallskip \bar{\mathcal{D}}=4.89\cdot10^{-11}\,\mathrm{m}^2/\mathrm{s}, & \bar{c}=1\cdot10^{19}\,\mathrm{m}^{-3},  & q=1.6\cdot10^{-19}\,\mathrm{C},  & \mu=0.832\,\mathrm{Pa}\cdot \mathrm{s}, \\ \smallskip \varepsilon\varepsilon_0=5.32\cdot10^{-11}\,\mathrm{F}/\mathrm{m}, & \rho=1\cdot10^{3}\, \mathrm{kg}/\mathrm{m}^3, & \Phi_0=400\,\mathrm{V}, & \bar{W}=5\cdot10^{-5}\,\mathrm{m}.   
\end{array}
\end{equation}
These yield
\begin{equation}
\label{eq:bars}
\bar{u}=5\cdot10^{-6}\, \mathrm{m}/\mathrm{s}\mbox{ and }\bar{\Phi}=2\,\mathrm{V},
\end{equation}
and
\begin{equation}
\label{eq:ndg}
\begin{array}{lllll}
\Rey=3\cdot10^{-7},& \Pe=5,& \varepsilon=5\cdot10^{-3},& \FN=79, & \BN=37.   
\end{array}
\end{equation}

According to these observations, here we will assume that the nondimensional groups $\FN,\BN\gg1$, so that we can take advantage of a natural small parameter $\frac{1}{\FN}$ in order to solve the problem \eqref{eq:r36}-\eqref{eq:r37} asymptotically. We thus suppose that $\delta:=\frac{1}{\FN}$, where $\delta\ll1$ and use \eqref{eq:ndg} to set $\BN=b/\delta$ where $b=O(1)$. Then \eqref{eq:r36}-\eqref{eq:r37} takes the form
\begin{equation}
\label{eq:r36.1}
\left\{
\begin{array}{ll}
\displaystyle\delta r_{,yy}-\frac{b}{\delta}\left(e^r-c_0^2e^{-r}\right)=-\beta_{,yy},  &   y\in(-\pi n,\pi n),  \\ \\
\displaystyle\int_{-\pi n}^{\pi n}e^r\,dy=c_0^2\int_{-\pi n}^{\pi n}e^{-r}\,dy=2\pi n. &
\end{array}
\right.
\end{equation}
Assuming that 
 \[r=\delta r_1+O\left(\delta^2\right),\quad c_0=1-\frac{c_1}{2}\delta+O\left(\delta^2\right),\]
 we can rewrite \eqref{eq:r36.1} as
\[\left\{
\begin{array}{ll}
-2b\,r_1=-\beta_{,yy}+b\,c_1,  &   y\in(-\pi n,\pi n),\\ \\
\displaystyle\int_{-\pi n}^{\pi n}r_1\,dy=2\pi nc_1+\int_{-\pi n}^{\pi n}r_1\,dy=0,  &    
\end{array}
\right.
\]
to the leading order in $\delta$. Then, it immediately follows that $c_1=0$ and that $r_1$ is given by
\begin{equation}
\label{eq:r39}
r_1(y)=\frac{1}{2b}\beta_{,yy}.
\end{equation} 
From \eqref{eq:r31} and \eqref{eq:r39}, setting $\phi(y)=\phi_0(y)+O\left(\delta\right),$ we obtain
\begin{equation}
\label{eq:r40}
\left\{
\begin{array}{ll}
\phi_{0,y}=-\beta_{,y},  &   y\in(-\pi n,\pi n),\\
\phi_{0,y}(\pm \pi n)=0.  &    
\end{array}
\right.
\end{equation}
We conclude that, up to an arbitrary constant of integration,
\begin{equation}
\label{eq:r41}
\phi_0(y)=-\beta(y).
\end{equation}
Finally, from \eqref{eq:n1}, the expression for the leading term in velocity $u(y)=u_0(y)+O\left(\delta\right)$ takes the form
\[u_0(y)=\beta(y)+C_u^1y+C_u^2,\]
where $C_u^1$ and $C_u^2$ are arbitrary constants. Since by \eqref{eq:bcs1d} we have that $u_{0y}(\pm \pi n)=0,$ then using the integral condition on velocity in \eqref{eq:eqs1d} it follows that
\[C_u^1=0,\quad\quad C_u^2=-\frac{1}{2\pi n}\int_{-\pi n}^{\pi n}\beta\,dy,\]
hence
\begin{equation}
\label{eq:r42}
u_0(y)=\beta(y)-\frac{1}{2\pi n}\int_{-\pi n}^{\pi n}\beta\,dy.
\end{equation}
To summarize the asymptotic results above, up to terms $O\left(\delta^2\right)$, we have 
\begin{align}
&c_p=1+\frac{1}{2\BN}\beta_{,yy}, \label{eq:r43.1} \\
&c_m=1-\frac{1}{2\BN}\beta_{,yy}, \label{eq:r43.2} \\
&\phi=-\beta, \label{eq:r43.3} \\
&u=\beta-\frac{1}{2\pi n}\int_{-\pi n}^{\pi n}\beta\,dy. \label{eq:r43.4} 
\end{align}
Here we have used $\delta=b/\BN$ and the facts that $c_p=e^r=1+\delta r_1+O\left(\delta^2\right)$ and $c_m=c_0^2e^{-r}=1-\delta r_1+O\left(\delta^2\right)$. 

Note that, since the scaling \eqref{eq:scu} for $u$ was chosen to be quadratic in the applied field while the product of $\FN$ and $\BN$ in \eqref{eq:nondimgrs} is independent of the field, the flow velocity $u$ is {\it quadratic} in 
the field when both are expressed in dimensional units. Thus reversing the direction of the field {\it would not} result in flow reversal. Note that 
  any solution obtained by solving \eqref{eq:r36.1}, also solves the original system \eqref{eq:eqs1d}, although it does not satisfy all of the boundary conditions in \eqref{eq:bcs1d}.

\subsubsection{Applicability of Asymptotic Solutions: Comparison with Numerical Results}
In Figs \ref{fig:cp-cm}-\ref{fig:phi}, we have plotted in dimensional units both the asymptotic solutions \eqref{eq:r43.1}-\eqref{eq:r43.4} as well as the numerical solutions of \eqref{eq:eqs1d}-\eqref{eq:bcs1d} for the parameter values given by \eqref{eq:dun}-\eqref{eq:ndg}. 
\begin{figure}[H]
\begin{center}
\includegraphics[height=2.5in]{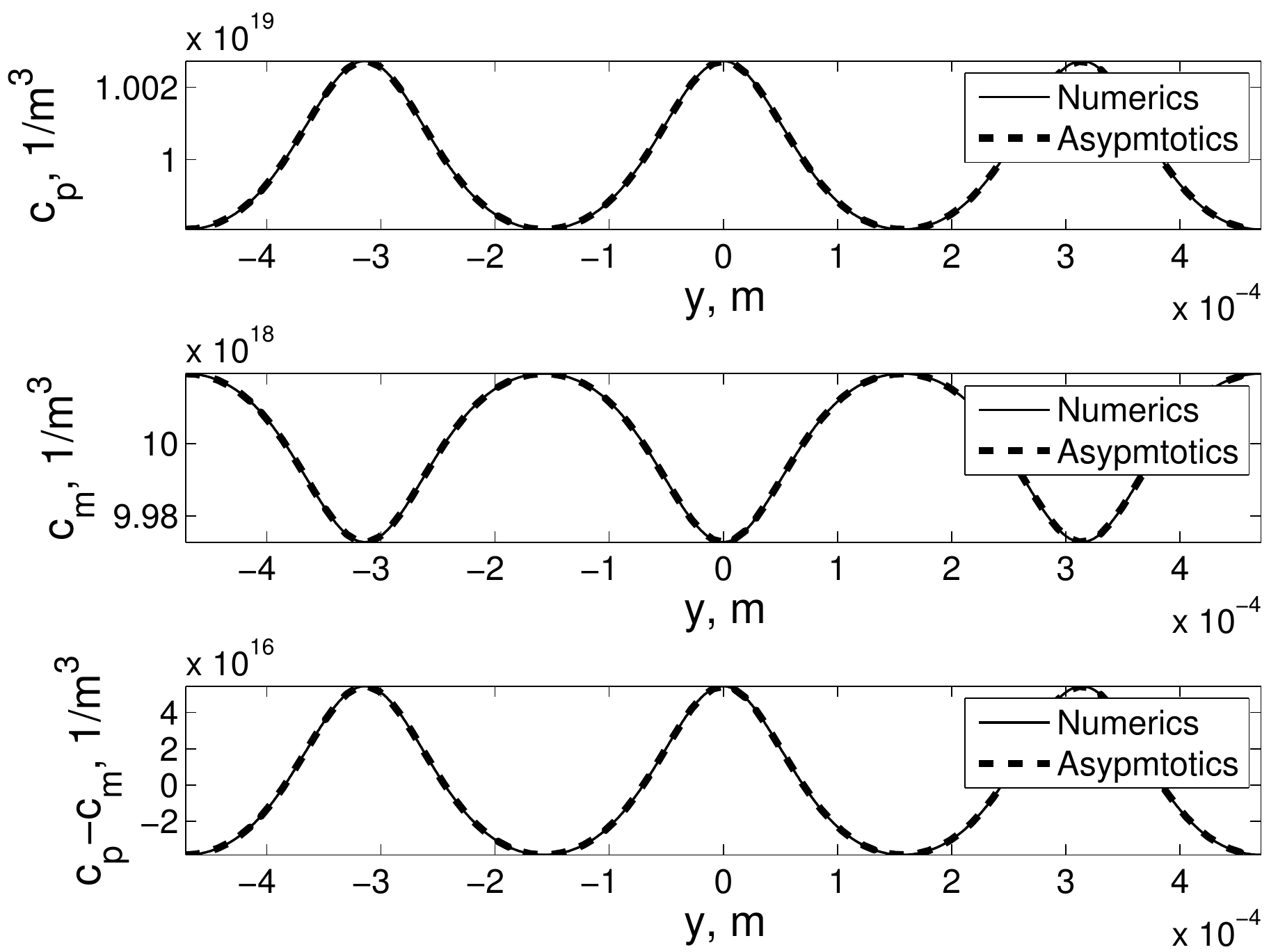}
\caption{Concentration of the positive ions (top), concentration of the negative ions (middle), and space charge distribution (bottom) for the parameters in \eqref{eq:dun}.}
\label{fig:cp-cm}
\end{center}
\end{figure}
The numerical solution was obtained by solving \eqref{eq:r36}-\eqref{eq:r37} in MATLAB \cite{MATLAB} using the standard boundary value problem solver. The results show an excellent match between the corresponding numerical and asymptotic solution fields. In Fig. \ref{fig:phi-s} we have plotted on a logarithmic scale the maximum magnitude of the flow velocity as a function of the applied potential. The dependence on the field is clearly quadratic since  $u_{max}\sim\Phi_0^{2}$. Note that the solutions in Figs \ref{fig:cp-cm}-\ref{fig:phi} are qualitatively similar to the experimental results in Fig. \ref{fig:exp} in that both demonstrate periodicity of velocity and charge distribution patterns. Further, even though some of the assumptions we have made in this section should rule out quantitative similarity between the theory and the experiment, the velocities predicted by the simplified model are of the same order of magnitude as those observed in the experiment (cf. Fig. \ref{fig:exp}).  
\begin{figure}[H]
\begin{center}
\includegraphics[height=2in]{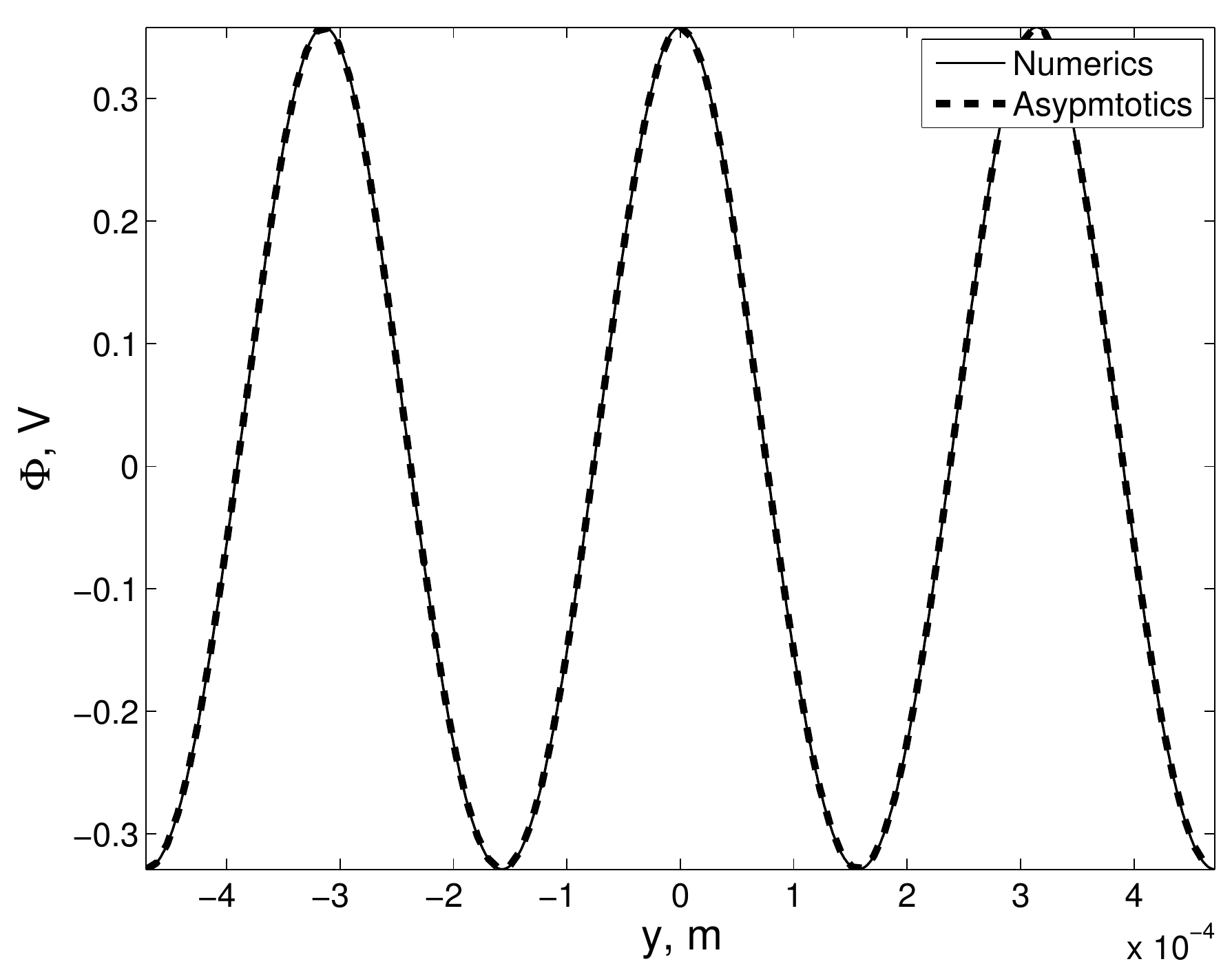}\qquad \includegraphics[height=2.07in]{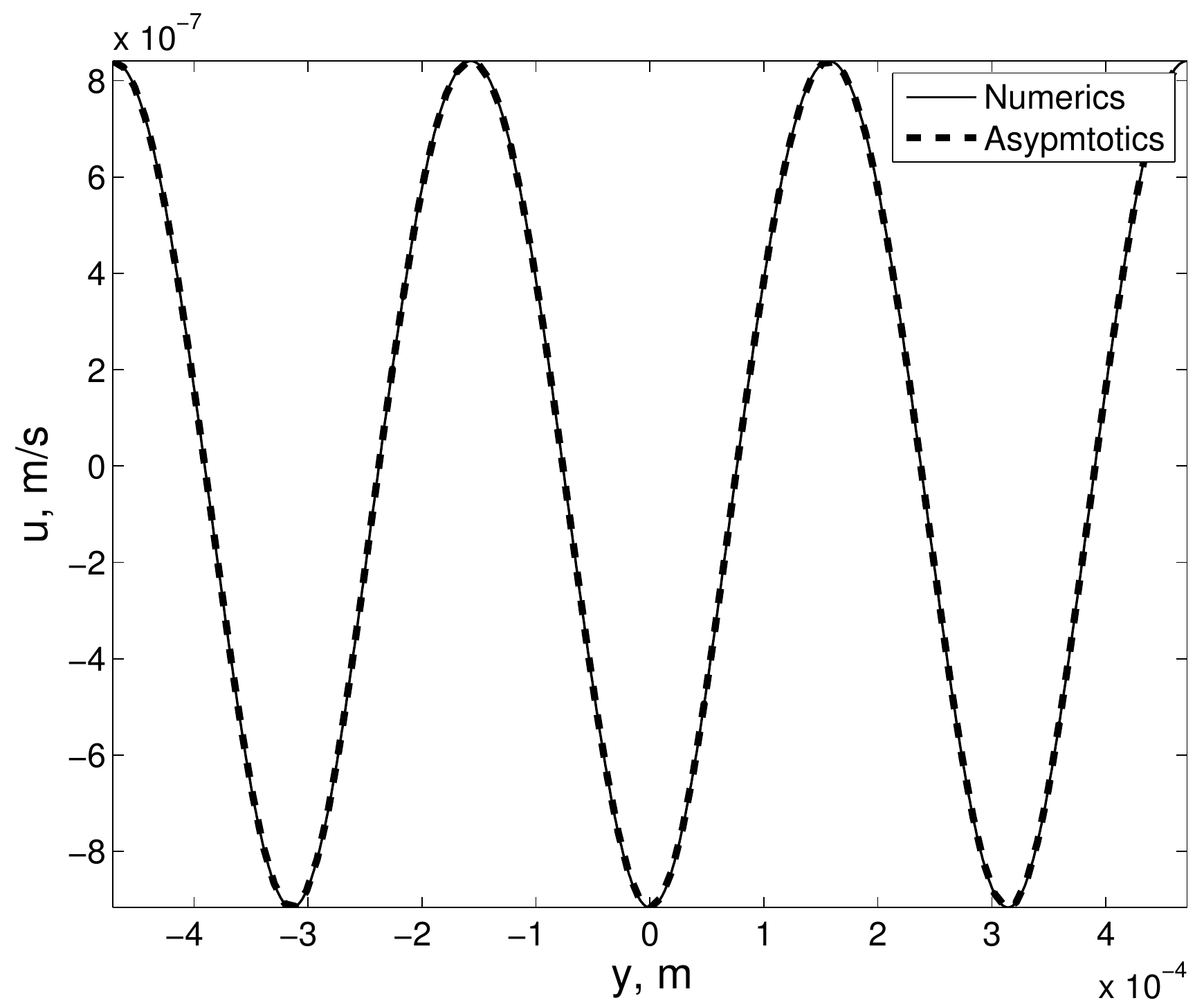}
\caption{Electric potential (left) and flow velocity (right) for the parameters in \eqref{eq:dun}.}
\label{fig:phi}
\end{center}
\end{figure}
\begin{figure}[H]
\begin{center}
\includegraphics[height=2in]{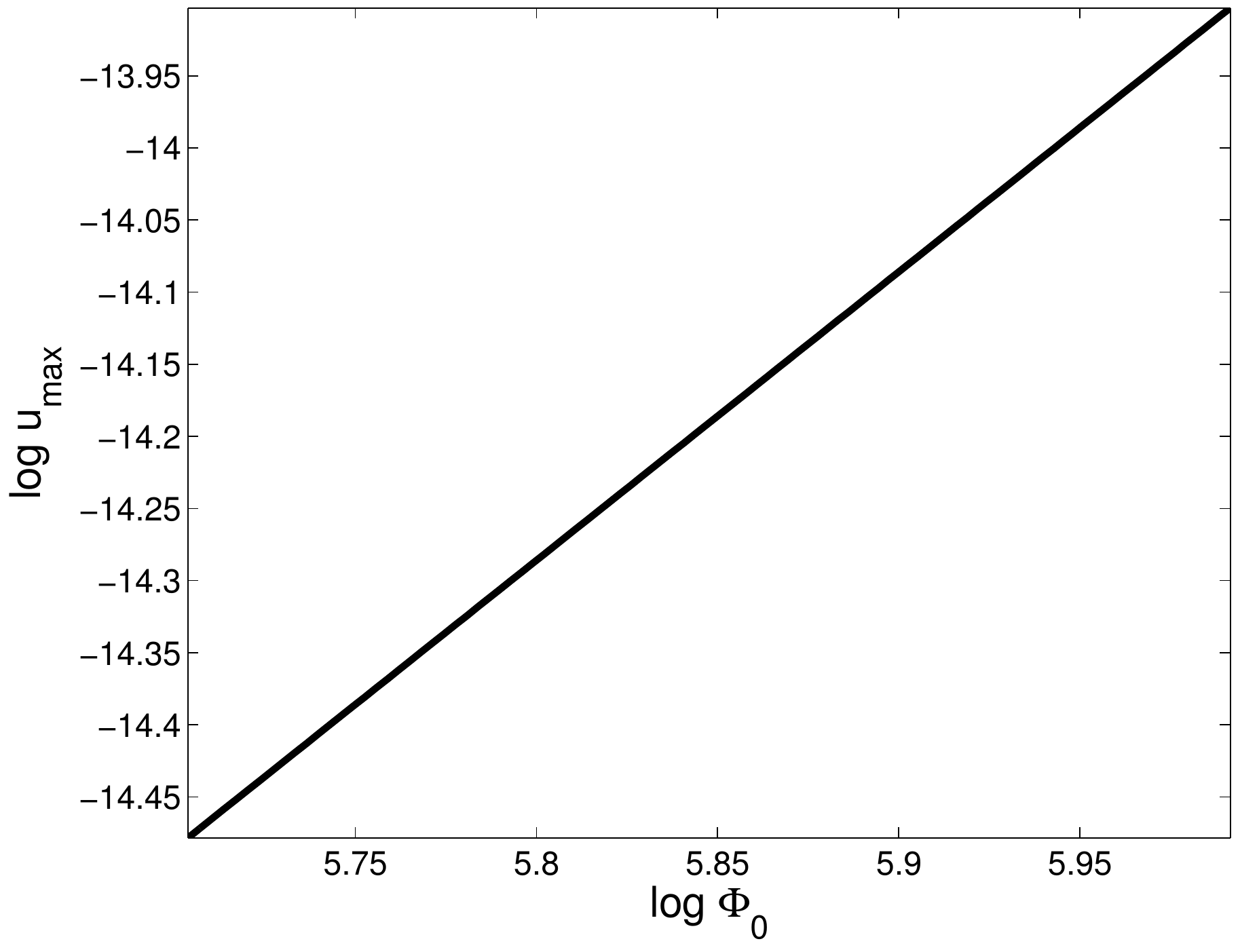}
\caption{Maximum flow velocity as a function of $\Phi_0$ on a logarithmic scale when $\bar{c}=1\cdot10^{19}\,\mathrm{m}^{-3}$ and $\Phi_0\in[300,400]$ V. The slope of the graph is equal to $2$.}
\label{fig:phi-s}
\end{center}
\end{figure}
In general, the behavior of solutions critically depends on the sizes of the nondimensional groups $\FN$ and $\BN$ defined in \eqref{eq:nondimgrs}. The parameter $\FN$ is equal to the ratio of the electrostatic potential energy of an ion and its thermal energy. Correspondingly, when $\FN$ is large, electrostatic force dominates over diffusion and the latter can be ignored. To understand the role of $\BN$, observe that if the applied field is strong enough to cause separation of all charges in the nematic, the field along the $y$-direction can be interpreted as a field in a capacitor 
with an area charge density of $\bar{c}\bar{W}$ and the distance between the capacitor plates equal to $\bar{W}$. 
The potential difference between the plates of such capacitor is equal to $\bar{c}\bar{W}^2/\varepsilon\varepsilon_0$ and we conclude that the nondimensional 
group $\BN$ is equal to the ratio between the characteristic potential $\bar\Phi$ and the maximum potential difference that can be supported by the system via separation
of charges. Then, if $\BN\gg1$---as in the experiment in 
\cite{LazoLavrent2013}---there are enough charges in the system to support quadratic growth of the flow velocity, as the applied field increases. 
If this parameter is small, however, then all of the available charges are expected to move---according to their sign---to different locations prescribed by the distribution of the nematic director. The flow then should become proportional to the magnitude of the field. 

\begin{figure}[H]
\begin{center}
\includegraphics[height=2.5in]{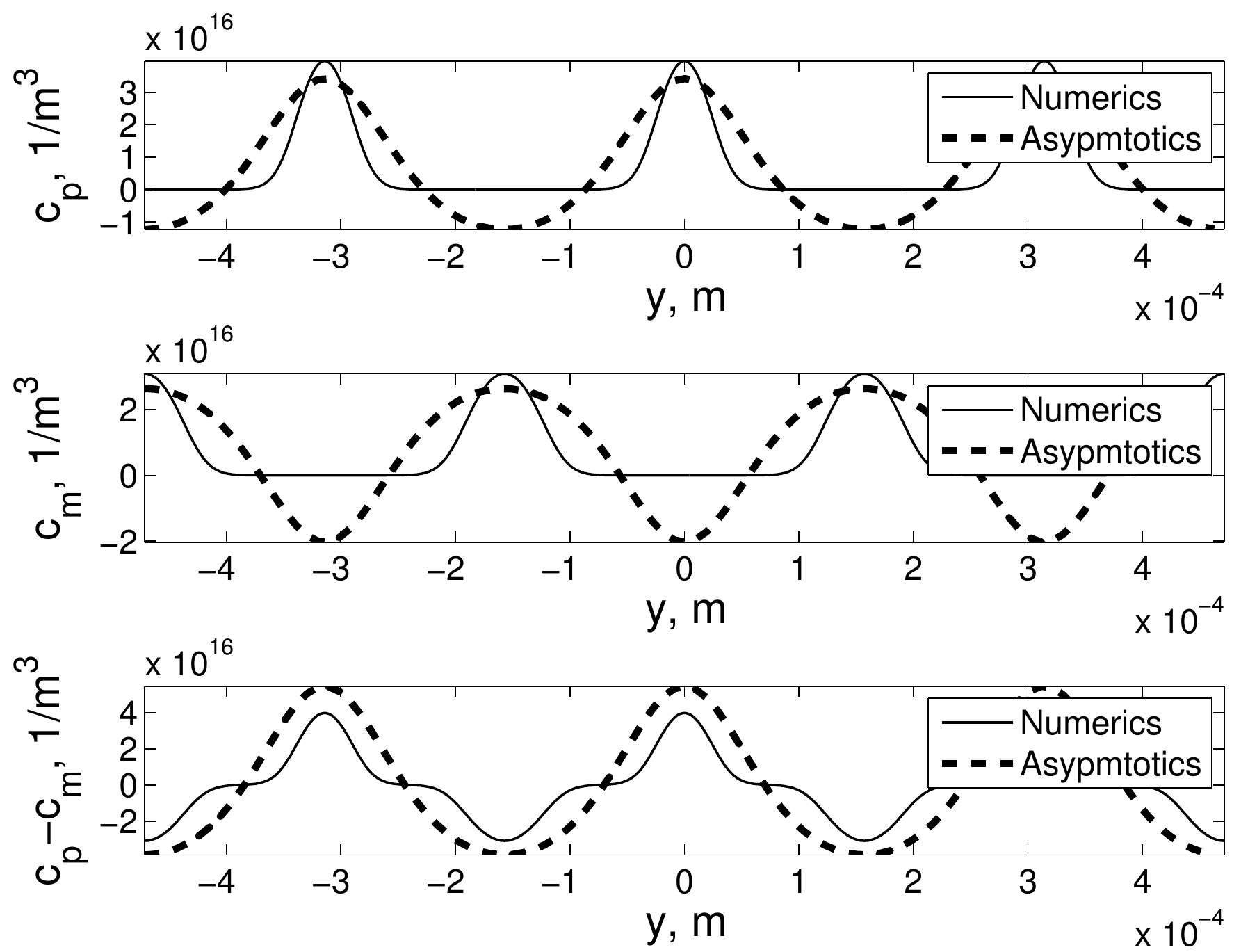}
\caption{Concentration of the positive ions (top), concentration of the negative ions (middle), and space charge distribution (bottom) when $\bar{c}=7\cdot10^{15}\,\mathrm{m}^{-3}$ and the remaining parameters are as in \eqref{eq:dun}.}
\label{fig:cp-cm1}
\end{center}
\end{figure}
\begin{figure}[H]
\begin{center}
\includegraphics[height=2in]{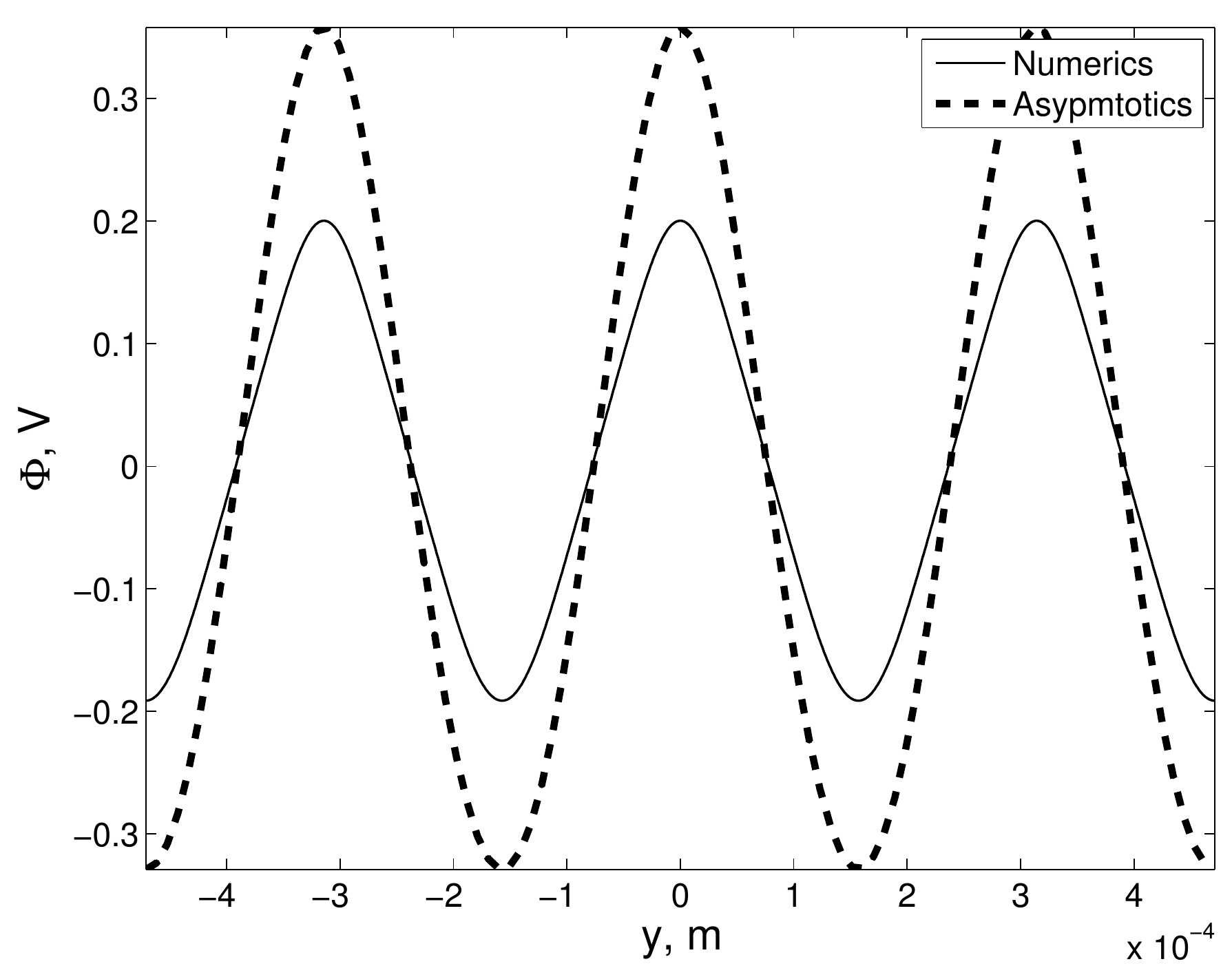}\ \includegraphics[height=2.07in]{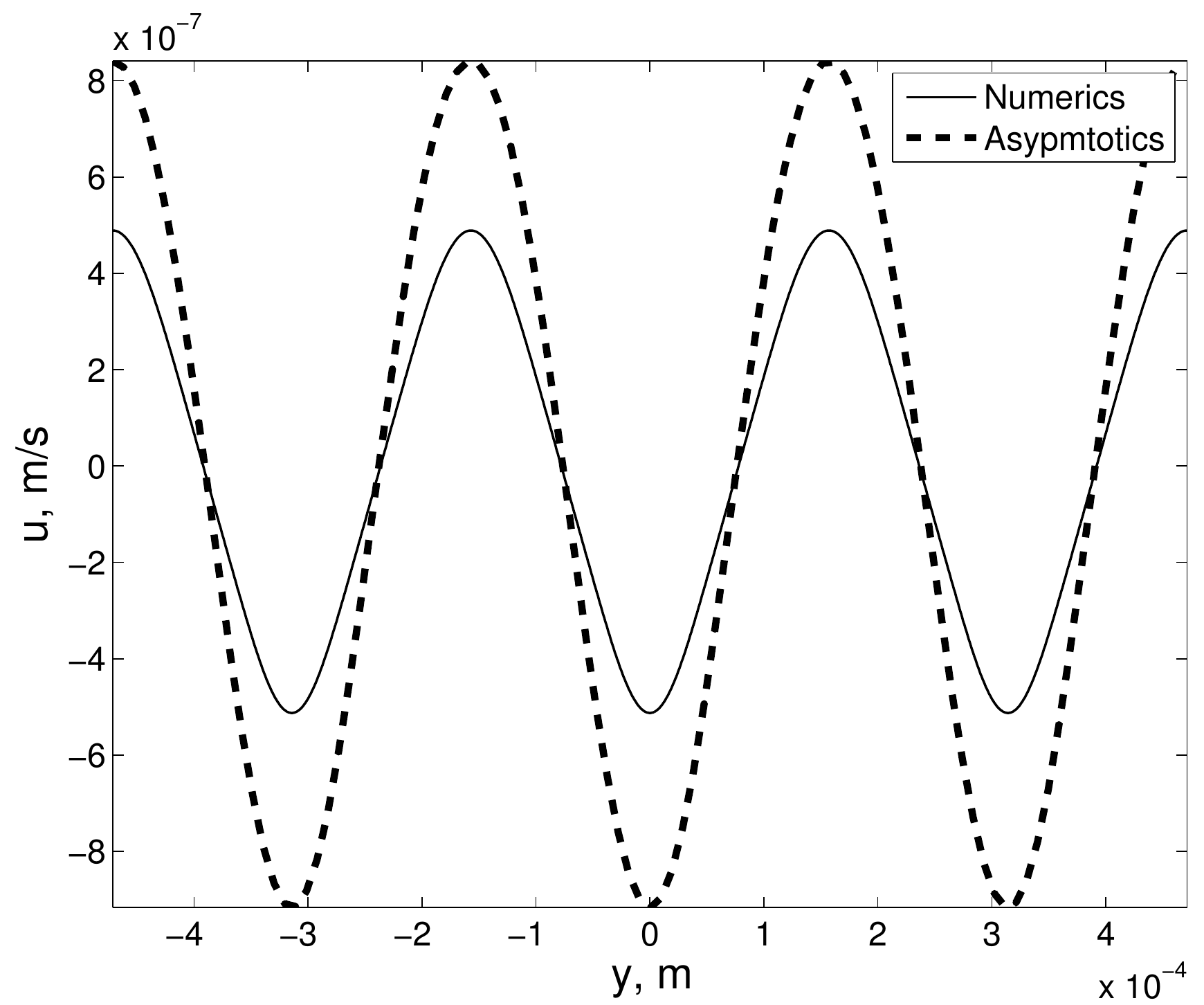}
\caption{Electric potential (left) and flow velocity (right) when $\bar{c}=7\cdot10^{15}\,\mathrm{m}^{-3}$ and the remaining parameters are as in \eqref{eq:dun}.}
\label{fig:phi1}
\end{center}
\end{figure}
Indeed, in Figs. \ref{fig:cp-cm1}-\ref{fig:phi1} we have plotted the numerical and asymptotic
solutions when $\bar{c}=7\cdot10^{15}\,\mathrm{m}^{-3}$ so that $\BN\ll1$. The graphs clearly show significant charge separation and confirm that the asymptotic solution is not valid when $\BN$ is small. {
Further, the dependence of the flow velocity on the field in Fig. \ref{fig:phi-s1} is closer to being linear, as $u_{max}\sim\Phi_0^{1.266}$, indicating that the system approaches saturation. The solutions depicted in Figs. \ref{fig:cp-cm1}-\ref{fig:phi1} can also be obtained via an asymptotic singular perturbation procedure for \eqref{eq:r36.1}
using an appropriate small parameter. This analysis is beyond the scope of this paper and will be presented elsewhere.
\begin{figure}[htbp]
\begin{center}
\includegraphics[height=2in]{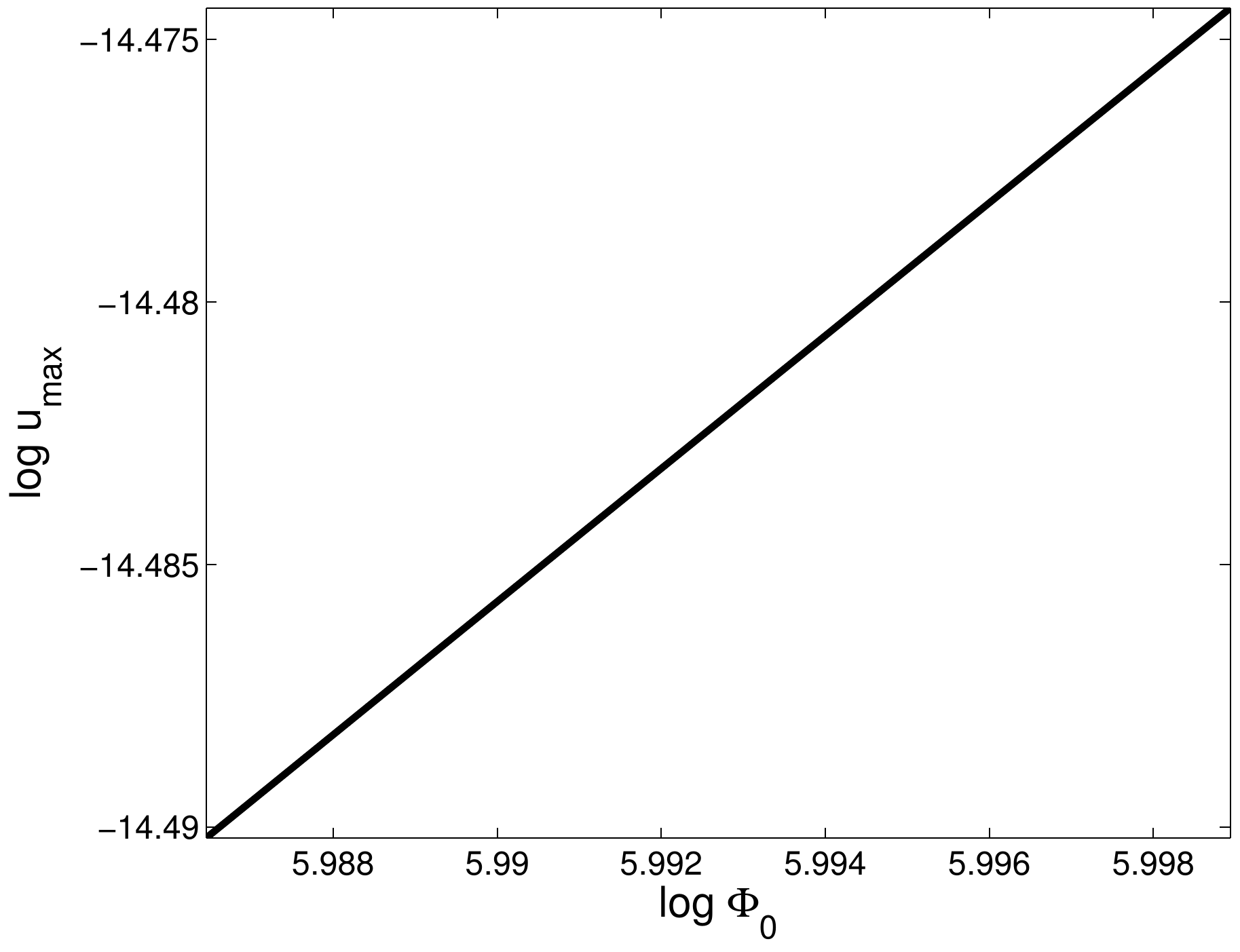}
\caption{Maximum flow velocity as a function of $\Phi_0$ on a logarithmic scale when $\bar{c}=7\cdot10^{15}\,\mathrm{m}^{-3}$ and $\Phi_0\in[398,403]$ V. The slope of the graph is approximately $1.266$.}
\label{fig:phi-s1}
\end{center}
\end{figure}
\section{Acknowledgements}
The authors acknowledge support from National Science Foundation, grant number DMS-DMREF 1435372.

\bibliographystyle{ieeetr}
\bibliography{colloids}  
\end{document}